\documentclass{article}

\usepackage{arxiv}
\usepackage[utf8]{inputenc}
\usepackage[T1]{fontenc}

\usepackage{xcolor}
\usepackage{amsfonts,amsthm}
\usepackage{amssymb}
\usepackage{amsmath}
\usepackage[round]{natbib}
\usepackage{dutchcal}
\usepackage{bm}
\usepackage{graphicx}
\usepackage{booktabs,array}
\usepackage{caption}
\usepackage{subcaption}
\usepackage{enumerate}
\usepackage{tikz}
\usepackage[colorlinks=true,linkcolor=blue,citecolor=blue,urlcolor=blue]{hyperref}
\IfFileExists{xurl.sty}{\usepackage{xurl}}{}

\newcommand{\CRMTitle}{The Geometry of Existence and Uniqueness of Maximum Likelihood Estimation in Categorical Response Models}
\newcommand{\CRMArxivKeywords}{Separation \and latent polyhedra \and structure vectors \and log-concavity \and linear programming}
\newcommand{\CRMPDFKeywords}{Separation, Latent Polyhedra, Structure Vectors, Log-Concavity, Linear Programming}
\newcommand{\CRMPDFAuthor}{Lukas Sablica; Kurt Hornik; Thomas Rusch}

\newcommand{\pkg}[1]{{\normalfont\fontseries{b}\selectfont #1}}
\newcommand{\R}{\mathbb{R}}
\renewcommand{\P}{\mathbb{P}}
\newcommand{\nS}{q}

\theoremstyle{plain}
\newtheorem{theorem}{Theorem}
\newtheorem{lemma}{Lemma}
\newtheorem{corollary}{Corollary}
\newtheorem{prop}{Proposition}
\theoremstyle{definition}
\newtheorem{definition}{Definition}

\theoremstyle{remark}

\hypersetup{
  pdftitle={\CRMTitle},
  pdfauthor={\CRMPDFAuthor},
  pdfkeywords={\CRMPDFKeywords}}

\title{\CRMTitle}

\author{
  Lukas Sablica \\
  Institute for Statistics and Mathematics\\
  WU Vienna University of Economics and Business\\
  Austria\\
  \url{https://www.wu.ac.at/en/statmath}\\
  ORCiD: 0000-0001-9166-4563\\
  \href{mailto:Lukas.Sablica@wu.ac.at}{\nolinkurl{Lukas.Sablica@wu.ac.at}}
   \And
 Kurt Hornik\\
Institute for Statistics and Mathematics\\
WU Vienna University of Economics and Business\\
Austria\\
\url{https://www.wu.ac.at/en/statmath}\\
ORCiD: 0000-0003-4198-9911\\
\href{mailto:Kurt.Hornik@wu.ac.at}{\nolinkurl{Kurt.Hornik@wu.ac.at}}
\And
 Thomas Rusch\\
Competence Center for Empirical Research Methods\\
WU Vienna University of Economics and Business\\
Austria\\
\url{https://www.wu.ac.at/en/competence-center-for-empirical-research-methods}\\
ORCiD: 0000-0002-7773-2096\\
\href{mailto:thomas.rusch@wu.ac.at}{\nolinkurl{Thomas.Rusch@wu.ac.at}}
}

\begin{document}
\maketitle

\begin{abstract}
Nonexistence of the maximum likelihood estimate (MLE) under separation is treated as a solved problem for binary logistic regression and as a scattered collection of model-specific results everywhere else. We show that it is one phenomenon with one criterion. In a latent polyhedral categorical response model, every observed outcome corresponds to a polyhedral event in latent variables whose faces shift linearly with the parameter. Random-utility choice, cumulative-link, ranking, multivariate binary and ordinal, sequential, adjacent-category logit, and fixed-score stereotype models belong to this family. The likelihood sees each observed factor only through the columns of its threshold map, the structure vectors. A finite MLE exists if and only if the pooled structure vector set has overlap. Sufficiency requires only continuity, necessity requires only strictly threshold-increasing probabilities, and neither likelihood concavity, exchangeability, nor full design rank is needed. Positive strictly log-concave latent densities and threshold-identifiable polyhedra then yield uniqueness on the estimable span. A single linear program determines whether overlap holds, and convex cone geometry measures the dimension of separation. An application using cumulative-link models for willingness to share health data identifies observations and model terms associated with nonexistence and illustrates how the diagnostics inform model revision and sensitivity analysis.
\end{abstract}

\keywords{\CRMArxivKeywords}

\section[Introduction]{Introduction}

When a categorical regression returns enormous coefficients, a singular Hessian, or enormous or missing standard errors, the usual suspects are the optimizer or the starting values. In binary logistic regression, separation is a familiar alternative explanation, for which there is a complete theory and growing software support \citep{detectseparation}. Separation of the observed covariate configurations makes a finite maximum likelihood estimate (MLE) impossible, and overlap makes it exist \citep{polytomous:Silvapulle:1981,polytomous:Albert+Anderson:1984}. The same symptoms can also arise in cumulative-link, multinomial, ranking, and multivariate latent-response models, which together carry much of applied categorical data analysis \citep{polytomous:Fahrmeir+Tutz:1994}. There the phenomenon is much less understood and symptoms can again be mistaken for numerical difficulties. This paper shows that this interpretation misses the structure of the problem. Across all of these models, nonexistence of the MLE is a single geometric property of the data and the model. One criterion characterizes existence, and one linear program establishes it.

The classical theory connects nonexistence of the MLE to separation of the observed predictor configurations \citep{polytomous:Silvapulle:1981,polytomous:Albert+Anderson:1984}. It was then extended model by model, to quantal and ordinal regression \citep{polytomous:Tse:1986,polytomous:Kaufmann:1988}, to multinomial and multivariate probit \citep{probit:Lesaffre-Kaufmann:1992,polytomous:Speckman+Lee+Sun:2009}, and to boundary problems in log-linear and exponential-family models \citep{Fienberg+Rinaldo:2012}. Each extension established existence for one class, and each leaned on properties that class happened to possess, such as concavity of the likelihood, full design rank, exchangeability, or one particular latent-error construction. Our existence criterion  abstracts from the specific narrow model class and requires neither likelihood concavity, exchangeability, nor full design rank.

That statement follows from an observation that is elementary once made. Write the event that observation $i$ falls in category $j$ as a polyhedron in latent variables,
\[
\left\{\bm{u}:H_{ij}\bm{u}\leq B_{ij}'\bm{\beta}\right\},
\]
with a fixed matrix $H_{ij}$ and bounds $B_{ij}'\bm{\beta}$ linear in the parameter $\bm{\beta}$. The events in random-utility choices are systems of utility comparisons, while in cumulative-link outcomes they are intervals.
Rankings are ordered comparison events, multivariate binary patterns are orthants, and multivariate ordinal responses are rectangles.
Sequential outcomes are signed systems of stagewise threshold inequalities. Adjacent-category logit models and fixed-score stereotype models enter through exact choice representations.
We call this family \emph{latent polyhedral categorical response models}. The likelihood sees each contribution from an observed category only through the columns of $B_{ij}$, which we call \emph{structure vectors}. Every result in this paper is a statement about their pooled set. The family is delimited by the linearity of the thresholds. As an example, reference-ratio models \citep{peyhardi+trottier+guedon:2015,glmcat}, whose comparison thresholds are nonlinear in the parameter, fall outside it.

Our first result is the existence criterion. The pooled structure vectors have overlap if no nonzero direction in their span weakly enlarges every latent event corresponding to an observed category. Under overlap, continuity of the latent threshold probabilities alone forces the likelihood to attain its maximum at a finite point. If positive-probability events also respond strictly to every effective threshold increase, overlap is necessary. For standard choice-type and cumulative-link models with positive error densities this gives an if and only if characterization. Rank-deficient designs require no reparameterization. Existence is settled on the span of the structure vectors, the estimable span, and the likelihood is constant orthogonally to it.

Our second result concerns uniqueness. A positive and strictly log-concave latent density produces strictly log-concave polyhedral probabilities whenever distinct admissible thresholds define genuinely different latent events, a property we call threshold identifiability. Together with the spanning behavior of the structure vectors, this makes the log-likelihood strictly concave on the estimable span. Examples include multinomial logit and probit, cumulative logit and probit, random-utility rankings, and multivariate binary and ordinal probit, with dependent and nonexchangeable latent errors allowed throughout.

The framework also gives a reusable diagnostic workflow for routinely checking separation. An analyst can write the observed categories as latent inequalities, pool the threshold columns, and check for overlap by linear programming. Using the same geometry, they can determine the separation type, describe the recession cone and identify implicated observations or model terms. This separates structural failure of the likelihood from numerical failure of an algorithm and provides a principled basis for model revision or penalized estimation. Penalized and bias-reduced procedures provide finite alternatives \citep{firth,kosmidis-lunardon-2024}, but do not determine whether the ordinary MLE exists or where a failure originates.

Section~\ref{sec:framework} introduces the family and the separation geometry, with its principal members collected in Table~\ref{tab:family}. Section~\ref{sec:existence} develops the overlap criterion and its model consequences. Section~\ref{sec:diagnostics} presents the linear-programming and cone-geometric diagnostics. Section~\ref{sec:uniqueness} establishes uniqueness, and Section~\ref{sec:application} illustrates the diagnostic workflow with a cumulative-link model for willingness to share health data. Proofs are given immediately after the corresponding results. Additional details for the empirical illustration are collected in the Appendix.

\section[Latent Polyhedral Models and Separation]{Latent Polyhedral Models and Separation}
\label{sec:framework}

\subsection[Likelihood and Latent Polyhedral Representation]{Likelihood and Latent Polyhedral Representation}

Write $[m]=\{1,\ldots,m\}$ and $[m]_{-j}=[m]\setminus\{j\}$. Consider $n$ independent multinomial count vectors $\bm{y}_i=(y_{i1},\ldots,y_{im})'$ with probability vectors $\bm{\pi}_i=(\pi_{i1},\ldots,\pi_{im})'$ and totals $n_i=\sum_{j=1}^m y_{ij}\geq1$. Let $\bm{\beta}\in\R^p$ be the unknown parameter. We write
\[
\bm{\pi}_i=h(Z_i'\bm{\beta})
=\bigl(h_1(Z_i'\bm{\beta}),\ldots,h_m(Z_i'\bm{\beta})\bigr)',
\]
where $Z_i=[\bm{z}_{i1},\ldots,\bm{z}_{im}]\in\R^{p\times m}$ is known and $h$ maps the linear predictors to the probability simplex. When a baseline category is used, we impose $\bm{z}_{im}=\bm{0}$ and write $Z=[Z_1,\ldots,Z_n]$. With $I_j=\{i:y_{ij}>0\}$, the likelihood and log-likelihood, up to terms independent of $\bm{\beta}$, are
\begin{equation}
\label{eq:likelihood}
L(\bm{\beta})
=\prod_{i=1}^n\prod_{j=1}^m h_j(Z_i'\bm{\beta})^{y_{ij}}
=\prod_{j=1}^m\prod_{i\in I_j}h_j(Z_i'\bm{\beta})^{y_{ij}},
\qquad
\ell(\bm{\beta})=\sum_{j=1}^m\sum_{i\in I_j}y_{ij}\log h_j(Z_i'\bm{\beta}).
\end{equation}

\begin{definition}[Latent polyhedral categorical response model]
For observation $i$, let $\bm{U}_i\in\R^{r_i}$ have a distribution that does not depend on $\bm{\beta}$. For each category $j\in[m]$, let $q_{ij}\geq1$, let $H_{ij}\in\R^{q_{ij}\times r_i}$ and let
$B_{ij}=[\bm{b}_{ij1},\ldots,\bm{b}_{ijq_{ij}}]\in\R^{p\times q_{ij}}$ be known. The model is a \emph{latent polyhedral categorical response model} if, for every admissible $\bm{\beta}$ and up to latent events of probability zero, the sets
\begin{equation}
\label{eq:polyhedral-event}
\mathcal{C}_{ij}(\bm{\beta})
=\left\{\bm{u}\in\R^{r_i}:H_{ij}\bm{u}\leq B_{ij}'\bm{\beta}\right\},
\qquad j\in[m],
\end{equation}
form a partition of the latent space and
\begin{equation}
\label{eq:linear-threshold}
h_j(Z_i'\bm{\beta})
=\P\bigl(\bm{U}_i\in\mathcal{C}_{ij}(\bm{\beta})\bigr).
\end{equation}
The inequalities are coordinatewise. Since the thresholds $B_{ij}'\bm{\beta}$ are linear in $\bm{\beta}$, we also call these linear-threshold categorical response models.
\end{definition}

Set $\bm{W}_{ij}=H_{ij}\bm{U}_i$ and let $G_{ij}$ be its CDF. The response probability can equivalently be written as
\begin{equation}
\label{eq:linear-threshold-cdf}
h_j(Z_i'\bm{\beta})
=\P\bigl(\bm{W}_{ij}\leq B_{ij}'\bm{\beta}\bigr)
=G_{ij}(B_{ij}'\bm{\beta}).
\end{equation}
The columns $\bm{b}_{ijr}$ of $B_{ij}$ are the \emph{structure vectors}. Only factors with positive counts affect the likelihood, so their pooled structure set is
\begin{equation}
\label{eq:S}
\mathcal{S}
=\left\{\bm{b}_{ijr}:j\in[m],\ i\in I_j,\ r\in[q_{ij}]\right\}.
\end{equation}
The matrices $H_{ij}$ determine which latent comparisons or bounds define an outcome. The vectors in $\mathcal{S}$ determine how the corresponding faces move with $\bm{\beta}$. This separation of roles lets one common geometry cover models whose conventional probability formulas look quite different. Table~\ref{tab:family} collects the members developed in this section, the latent events that represent their outcomes, the structure vectors that those events contribute, and the results that settle existence and uniqueness for them.

A latent event may have several equivalent inequality descriptions. The results depend on the pooled structure vectors only through the cone \(K=\operatorname{cone}(\mathcal S)\). Permuting, duplicating, or positively rescaling inequalities does not affect overlap, the estimable span, the lineality space, or the cone of separating directions. The same applies when adding or removing conically redundant structure vectors, as \(K\) remains unchanged. We therefore use effective representations and omit probability-irrelevant or conically redundant inequalities. For example, adding comparisons that follow transitively from an irredundant ranking representation does not change the structure cone.

\subsection{Choice-Type Models}
\label{sec:choice-models}

Choice-type models arise from latent utilities $\bm{z}_{ij}'\bm{\beta}+\xi_{ij}$ \citep{polytomous:Fahrmeir+Tutz:1994}. Category $j$ is generated when its utility exceeds that of every competitor. Rearranging the comparisons gives
\begin{equation}
\label{eq:G_j}
h_j(Z_i'\bm{\beta})
=\P\left(\xi_{ik}-\xi_{ij}\leq(\bm{z}_{ij}-\bm{z}_{ik})'\bm{\beta},\ \forall k\in[m]_{-j}\right)
=G_j\left(\bigl((\bm{z}_{ij}-\bm{z}_{ik})'\bm{\beta},\ k\in[m]_{-j}\bigr)\right).
\end{equation}
Here $\bm{U}_i=\bm{\xi}_i$ and $H_{ij}=H_j$, where the row of $H_j$ indexed by $k\in[m]_{-j}$ is $\bm{e}_k'-\bm{e}_j'$. The matching column of $B_{ij}$ is $\bm{z}_{ij}-\bm{z}_{ik}$. Hence the pooled structure set is
\begin{equation}
\label{eq:S-CT}
\mathcal{S}_{\mathrm{CT}}
=\left\{\bm{z}_{ij}-\bm{z}_{ik}:j\in[m],\ i\in I_j,\ k\in[m]_{-j}\right\}.
\end{equation}
Independent Gumbel or Gaussian utility errors yield multinomial logit or multinomial probit. More generally, neither independence nor exchangeability is required for the existence argument below.

The following proposition connects the geometry in Equation~\ref{eq:S-CT} to the usual design rank.

\begin{prop}
\label{lem:rank-equivalence}
Let $Z_i=[\bm{z}_{i1},\dots,\bm{z}_{im}]\in\mathbb{R}^{p\times m}$ and suppose that $\bm{z}_{im}=\bm{0}$ for every $i$. Then
\[
\mathrm{span}(\mathcal{S}_{\mathrm{CT}})=\mathrm{span}(Z),
\]
where $Z=[Z_1,\dots,Z_n]$.
\end{prop}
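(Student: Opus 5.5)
The plan is to prove the two inclusions separately. One is immediate. The other uses the baseline convention $\bm{z}_{im}=\bm{0}$ together with the standing assumption $n_i\geq1$, which guarantees that every observation $i$ lies in at least one index set $I_j$.

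\emph{Inclusion $\mathrm{span}(\mathcal{S}_{\mathrm{CT}})\subseteq\mathrm{span}(Z)$.} Every element of $\mathcal{S}_{\mathrm{CT}}$ in Equation~\ref{eq:S-CT} has the form $\bm{z}_{ij}-\bm{z}_{ik}$. Both $\bm{z}_{ij}$ and $\bm{z}_{ik}$ are columns of $Z_i$, hence of $Z$, so each structure vector is a linear combination of columns of $Z$. Taking spans gives the inclusion.

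\emph{Inclusion $\mathrm{span}(Z)\subseteq\mathrm{span}(\mathcal{S}_{\mathrm{CT}})$.} It suffices to show that every column $\bm{z}_{ik}$ of $Z$ lies in $\mathrm{span}(\mathcal{S}_{\mathrm{CT}})$. Fix $i$. Since $n_i=\sum_j y_{ij}\geq1$, there is some category $j$ with $y_{ij}>0$, that is, $i\in I_j$. By Equation~\ref{eq:S-CT}, all differences $\bm{z}_{ij}-\bm{z}_{ik}$ with $k\in[m]_{-j}$ then belong to $\mathcal{S}_{\mathrm{CT}}$. Now split into two cases.
\begin{itemize}
\item If $j=m$, then $\bm{z}_{im}-\bm{z}_{ik}=-\bm{z}_{ik}$ for every $k\neq m$. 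Hence each $\bm{z}_{ik}$ with $k\neq m$ is in the span, and $\bm{z}_{im}=\bm{0}$ is in it trivially.
\item If $j\neq m$, then taking $k=m$ gives $\bm{z}_{ij}-\bm{z}_{im}=\bm{z}_{ij}\in\mathcal{S}_{\mathrm{CT}}$. For any other $k\neq j$, the identity $\bm{z}_{ik}=\bm{z}_{ij}-(\bm{z}_{ij}-\bm{z}_{ik})$ writes $\bm{z}_{ik}$ as a difference of two elements of $\mathcal{S}_{\mathrm{CT}}$.
\end{itemize}
In either case all columns of $Z_i$ lie in $\mathrm{span}(\mathcal{S}_{\mathrm{CT}})$. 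Ranging over $i$ gives the inclusion, and the two inclusions together give equality.

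The only genuinely delicate point is the one this argument leans on. $\mathcal{S}_{\mathrm{CT}}$ contains only differences anchored at \emph{observed} categories, so one must confirm that each observation contributes at least one anchor $j$. This is exactly what $n_i\geq1$ provides. The baseline $\bm{z}_{im}=\bm{0}$ is what converts differences into the columns themselves; without it one would only recover the span of within-observation contrasts.
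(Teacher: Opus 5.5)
Your proposal is correct and follows essentially the same route as the paper's proof: the trivial inclusion from differences of columns, then $n_i\geq1$ to obtain an observed anchor category $j$, and the same case split $j=m$ versus $j\neq m$ using the baseline $\bm{z}_{im}=\bm{0}$. Your closing remark correctly identifies why both $n_i\geq1$ and the baseline convention are needed.
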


\begin{proof}
First, $\mathrm{span}(\mathcal{S}_{\mathrm{CT}})\subseteq\mathrm{span}(Z)$ holds because each $\bm{s}\in\mathcal{S}_{\mathrm{CT}}$ is a difference of two columns of $Z$, hence a linear combination of columns of $Z$.

For the reverse inclusion, fix any $i$. Since
$n_i=\sum_{j=1}^m y_{ij}\geq 1$, there exists some $j$ with
$y_{ij}>0$, i.e., $i\in I_j$.

If $j\neq m$, then under the baseline-$m$ constraint
$\bm{z}_{im}=\bm{0}$,
\[
\bm{z}_{ij}=\bm{z}_{ij}-\bm{z}_{im}\in\mathcal{S}_{\mathrm{CT}},
\]
so $\bm{z}_{ij}\in\mathrm{span}(\mathcal{S}_{\mathrm{CT}})$. For any
$k\in[m]_{-j}$,
\[
\bm{z}_{ik}=\bm{z}_{ij}-(\bm{z}_{ij}-\bm{z}_{ik}),
\]
and $\bm{z}_{ij}-\bm{z}_{ik}\in\mathcal{S}_{\mathrm{CT}}$ by construction.
Hence $\bm{z}_{ik}\in\mathrm{span}(\mathcal{S}_{\mathrm{CT}})$ for every $k$.

If $j=m$, then for every $k\in[m]_{-m}$,
\[
\bm{z}_{im}-\bm{z}_{ik}=-\bm{z}_{ik}\in\mathcal{S}_{\mathrm{CT}}.
\]
Therefore $\bm{z}_{ik}\in\mathrm{span}(\mathcal{S}_{\mathrm{CT}})$ for every
$k\neq m$, while $\bm{z}_{im}=\bm{0}$ belongs to this span as well.

Thus, in either case, all $\{\bm{z}_{ik}:k\in[m]\}$ lie in
$\mathrm{span}(\mathcal{S}_{\mathrm{CT}})$, and therefore
$\mathrm{span}(Z)\subseteq\mathrm{span}(\mathcal{S}_{\mathrm{CT}})$.

Combining both inclusions yields $\mathrm{span}(\mathcal{S}_{\mathrm{CT}})=\mathrm{span}(Z)$.
\end{proof}

When $m=2$, write $\bm{x}_i=\bm{z}_{i1}-\bm{z}_{i2}$ and let $F$ be the CDF of $\xi_{i2}-\xi_{i1}$. The response probabilities are $F(\bm{x}_i'\bm{\beta})$ and $1-F(\bm{x}_i'\bm{\beta})$. The signed structure vectors are $\bm{x}_i$ for observations in category 1 and $-\bm{x}_i$ for observations in category 2. Standard choices of $F$ give logit, probit, cloglog, and cauchit regression.

\subsection{Cumulative-Link Models}
\label{sec:cumulative-models}

For a cumulative-link ordinal model \citep{McCullagh,polytomous:Agresti:2010}, let $Y_i^*=\bm{x}_i'\bm{\theta}+\epsilon_i$ and let
$-\infty=\alpha_0<\alpha_1<\cdots<\alpha_{m-1}<\alpha_m=\infty$. The event $Y_i=j$ is
\begin{equation}
\label{eq:cumulative-interval}
\alpha_{j-1}-\bm{x}_i'\bm{\theta}<\epsilon_i
\leq\alpha_j-\bm{x}_i'\bm{\theta}.
\end{equation}
Consequently,
\[
\P(Y_i=j)=F(\alpha_j-\bm{x}_i'\bm{\theta})
-F(\alpha_{j-1}-\bm{x}_i'\bm{\theta}).
\]
Write $\bm{\beta}=(\alpha_1,\ldots,\alpha_{m-1},\bm{\theta}')'$ and let $\bm{e}_j$ be the $j$th coordinate vector in $\R^{m-1}$. The latent polyhedral matrices are
\begin{align*}
H_{i1}&=[1],
& B_{i1}&=[(\bm{e}_1',-\bm{x}_i')'],\\
H_{ij}&=(1,-1)',
& B_{ij}&=[(\bm{e}_j',-\bm{x}_i')',(-\bm{e}_{j-1}',\bm{x}_i')'],\quad 1<j<m,\\
H_{im}&=[-1],
& B_{im}&=[(-\bm{e}_{m-1}',\bm{x}_i')'].
\end{align*}
For an interior category, $H_{ij}\epsilon_i=(\epsilon_i,-\epsilon_i)'$ and the two thresholds impose the upper and lower endpoints of the interval. The transformed vector is singular in $\R^2$, which causes no problem because the primitive event remains a one-dimensional interval. The pooled structure set is
\begin{align}
\label{eq:S-CL}
\mathcal{S}_{\mathrm{CL}}&=\mathcal{S}_{\mathrm{CL},1}\cup\mathcal{S}_{\mathrm{CL},2},\\
\mathcal{S}_{\mathrm{CL},1}&=\left\{(\bm{e}_j',-\bm{x}_i')':j=1,\ldots,m-1,\ i\in I_j\right\},\\
\mathcal{S}_{\mathrm{CL},2}&=\left\{(-\bm{e}_{j-1}',\bm{x}_i')':j=2,\ldots,m,\ i\in I_j\right\}.
\end{align}
The vectors in $\mathcal{S}_{\mathrm{CL},1}$ move upper endpoints upward and those in $\mathcal{S}_{\mathrm{CL},2}$ move lower endpoints downward. When $m=2$, the cumulative and choice representations coincide. Logistic, Gaussian, and minimum extreme-value errors give cumulative logit, probit, and cloglog models.

\subsection{Further Direct Members of the Family}
\label{sec:direct-members}

The same construction covers outcomes that impose several latent comparisons or coordinate bounds at once.

\noindent\textbf{Rankings and partially observed comparisons.}
Ranked outcomes can be represented as systems of pairwise latent-utility
comparisons. Under independent extreme-value errors, this construction yields
the rank-ordered logit model of \citet{hausman_specifying_1987}, while the
formulation below allows general continuous joint error distributions and
partially observed comparison sets. Let $\mathcal{E}_i$ contain the ordered pairs $(a,b)$ for which observation $i$ reports alternative $a$ above alternative $b$. The observed event is
\begin{equation}
\label{eq:ranking-event-general}
\xi_{ib}-\xi_{ia}
\leq
(\bm{z}_{ia}-\bm{z}_{ib})'\bm{\beta},
\qquad (a,b)\in\mathcal{E}_i,
\end{equation}
with pooled structure set
\begin{equation}
\label{eq:ranking-structure-set}
\mathcal{S}_{\mathrm{rank}}
=\{\bm{z}_{ia}-\bm{z}_{ib}:(a,b)\in\mathcal{E}_i,\ i=1,\ldots,n\}.
\end{equation}
For a complete ranking $r_{i1}\succ\cdots\succ r_{iJ}$, the $J-1$ adjacent comparisons suffice. Partial rankings retain only reported comparisons. Top-$k$ and best-worst observations are represented by the comparisons implied by the reported sets or alternatives.

\noindent\textbf{Multivariate binary latent responses.}
For a multivariate binary latent responses \citep{probit:Lesaffre-Kaufmann:1992}, let $\bm{Y}_i=(Y_{i1},\ldots,Y_{id})'$ satisfy
$Y_{ir}=1\{\bm{x}_{ir}'\bm{\theta}+\epsilon_{ir}>0\}$ and let $X_i$ have rows $\bm{x}_{ir}'$. With
$D_i=\mathrm{diag}(2Y_{i1}-1,\ldots,2Y_{id}-1)$, the observed response pattern is the orthant event
\begin{equation}
\label{eq:multibinary-event-consequence}
-D_i\bm{\epsilon}_i\leq D_iX_i\bm{\theta}.
\end{equation}
The structure vector associated with component $r$ is $(2Y_{ir}-1)\bm{x}_{ir}$. This construction includes correlated multivariate binary-response models, and with a fixed positive-definite Gaussian covariance matrix, the bivariate or multivariate probit model of \citet{probit:Lesaffre-Kaufmann:1992}.

\noindent\textbf{Multivariate ordinal latent responses.}
For multivariate ordinal latent response \citep{li_likelihood_2008}, suppose component $r$ is observed in category $c$ when
$\alpha_{r,c-1}<\bm{x}_{ir}'\bm{\theta}+\epsilon_{ir}\leq\alpha_{r,c}$. Collect the lower and upper cutpoints selected by the observed pattern in $\bm{l}_i$ and $\bm{u}_i$. The joint response is the rectangle event
\begin{equation}
\label{eq:multiordinal-event-consequence}
\begin{pmatrix}I_d\\-I_d\end{pmatrix}\bm{\epsilon}_i
\leq
\begin{pmatrix}
\bm{u}_i-X_i\bm{\theta}\\
-\bm{l}_i+X_i\bm{\theta}
\end{pmatrix}.
\end{equation}
If cutpoints are estimated, their coordinate vectors enter the threshold maps exactly as in the univariate cumulative model. Boundary categories simply omit one side of the corresponding coordinate bound.

\noindent\textbf{Sequential responses.}
A sequential or continuation-ratio model \citep{polytomous:Tutz:1991} specifies
\[
\P(Y_i=k\mid Y_i\geq k)=G_k(\gamma_{ik}),
\qquad k=1,\ldots,m-1,
\]
where each predictor $\gamma_{ik}=\bm{z}_{ik}'\bm{\beta}$ is linear in the common parameter vector after any stage-specific blocks have been embedded in $\bm{\beta}$. Introduce independent stage errors $\epsilon_{ik}$ with CDF $G_k$ and write $\bm{U}_i=(\epsilon_{i1},\ldots,\epsilon_{i,m-1})'$. Observing category $j<m$ means continuing through stages $1,\ldots,j-1$ and stopping at stage $j$. Up to probability-zero boundaries, this event is
\begin{equation}
\label{eq:sequential-polyhedral}
H_j^{\mathrm{seq}}\bm{U}_i\leq (B_{ij}^{\mathrm{seq}})'\bm{\beta},
\qquad
H_j^{\mathrm{seq}}=
\begin{pmatrix}
-\bm{e}_1'\\
\vdots\\
-\bm{e}_{j-1}'\\
\bm{e}_j'
\end{pmatrix},
\qquad
(B_{ij}^{\mathrm{seq}})'\bm{\beta}=
\begin{pmatrix}
-\gamma_{i1}\\
\vdots\\
-\gamma_{i,j-1}\\
\gamma_{ij}
\end{pmatrix},
\end{equation}
with the continuation block absent when $j=1$. For the final category,
$H_m^{\mathrm{seq}}=-I_{m-1}$ and $(B_{im}^{\mathrm{seq}})'\bm{\beta}=(-\gamma_{i1},\ldots,-\gamma_{i,m-1})'$. Thus the structure vectors are $-\bm{z}_{ik}$ for continued stages and $\bm{z}_{ij}$ for the stopping stage. Independence gives the familiar staged binary-choice probabilities
\begin{equation}
\label{eq:sequential-probabilities}
\P(Y_i=j)=G_j(\gamma_{ij})\prod_{k<j}\{1-G_k(\gamma_{ik})\},
\quad j<m,
\qquad
\P(Y_i=m)=\prod_{k=1}^{m-1}\{1-G_k(\gamma_{ik})\}.
\end{equation}
Equation~\ref{eq:sequential-polyhedral} shows that the model is a direct latent polyhedral member, while Equation~\ref{eq:sequential-probabilities} gives its usual interpretation as a sequence of binary choices.

\subsection{Ordinal Models with Choice Representations}
\label{sec:choice-ordinal}

Two familiar ordinal logit constructions become latent polyhedral models through exact choice representations. The corresponding effective design vectors are explicit.

\noindent\textbf{Adjacent-category logit models.}
An adjacent-category logit model \citep{polytomous:Agresti:2010} specifies
\begin{equation}
\label{eq:adj_cat_def_simple}
\log\frac{\P(Y_i=j)}{\P(Y_i=j+1)}=\gamma_{ij},
\qquad j=1,\ldots,m-1.
\end{equation}
From Equation~\ref{eq:adj_cat_def_simple},
$\P(Y_i=j)=\P(Y_i=j+1)\exp(\gamma_{ij})$. Recursing to the final
category gives $\P(Y_i=j)=\P(Y_i=m)\exp(\Gamma_{ij})$, where
$\Gamma_{ij}=\sum_{k=j}^{m-1}\gamma_{ik}$ for $j<m$ and
$\Gamma_{im}=0$. Since the probabilities sum to one,
\begin{equation}
\label{eq:adjacent-softmax-consequence}
\P(Y_i=j)=\frac{\exp(\Gamma_{ij})}
{\sum_{l=1}^m\exp(\Gamma_{il})},
\end{equation}
so the model is exactly a Gumbel random-utility choice model with systematic utilities $\Gamma_{ij}$.

When the adjacent predictors are linear, $\gamma_{ij}=\bm{a}_{ij}'\bm{\beta}$, define
\[
\bm{z}_{ij}=\sum_{k=j}^{m-1}\bm{a}_{ik},\qquad j<m,
\qquad
\bm{z}_{im}=\bm{0}.
\]
Then $\Gamma_{ij}=\bm{z}_{ij}'\bm{\beta}$, the latent comparison matrix is the ordinary choice matrix $H_j$, and the structure vectors are
\begin{equation}
\label{eq:S-ACL}
\mathcal{S}_{\mathrm{ACL}}
=\{\bm{z}_{ij}-\bm{z}_{ik}:j\in[m],\ i\in I_j,\ k\in[m]_{-j}\}.
\end{equation}
For the common proportional form $\gamma_{ij}=\tilde{\alpha}_j-\bm{x}_i'\bm{\theta}$, with
$\bm{\beta}=(\tilde{\alpha}_1,\ldots,\tilde{\alpha}_{m-1},\bm{\theta}')'$, this gives
\[
\bm{z}_{ij}
=\left(\sum_{k=j}^{m-1}\bm{e}_k',-(m-j)\bm{x}_i'\right)',
\qquad j<m,
\qquad
\bm{z}_{im}=\bm{0}.
\]

\noindent\textbf{Fixed-score stereotype models.}
The ordered stereotype model with fixed scores \citep{polytomous:Anderson:1984} satisfies
\begin{equation}
\label{eq:stereotype-main}
\log\frac{\P(Y_i=j)}{\P(Y_i=m)}
=\alpha_j+\phi_j\bm{x}_i'\bm{\theta},
\qquad j=1,\ldots,m-1.
\end{equation}
This is a constrained baseline-category logit model. Set $\alpha_m=0$, $\bm{e}_m=\bm{0}$, and $\phi_m=0$. For fixed scores, the systematic utility of category $j$ is
\[
\eta_{ij}=\alpha_j+\phi_j\bm{x}_i'\bm{\theta}
=\bm{z}_{ij}'\bm{\beta},
\qquad
\bm{z}_{ij}=(\bm{e}_j',\phi_j\bm{x}_i')',
\qquad
\bm{\beta}=(\bm{\alpha}',\bm{\theta}')'.
\]
Hence
\[
\P(Y_i=j)=\frac{\exp(\eta_{ij})}
{\sum_{l=1}^m\exp(\eta_{il})},
\]
and the model is a constrained choice representation with ordinary choice comparisons. If the observed category is $y_i$, comparison with competitor $k$ gives the structure vector
\[
\bm{s}_{i,k}
=\bigl((\bm{e}_{y_i}-\bm{e}_k)',
((\phi_{y_i}-\phi_k)\bm{x}_i)'
\bigr)'.
\]
The corresponding structure set is
\begin{equation}
\label{eq:S-stereotype}
\mathcal{S}_{\mathrm{ST}}
=\{\bm{s}_{i,k}:i\in[n],\ k\in[m]_{-y_i}\}.
\end{equation}
For fixed scores the systematic utilities are linear in $(\bm{\alpha}',\bm{\theta}')'$, so this is an exact constrained choice representation with the displayed comparison vectors. When the stereotype scores are estimated jointly with the regression coefficients, the threshold map is bilinear and the linear structure-vector criterion is conditional on the scores.

\begin{table}[!tb]
\centering
\scriptsize
\renewcommand{\arraystretch}{1.08}
\setlength{\tabcolsep}{4pt}
\begin{tabular}{@{}>{\raggedright\arraybackslash}p{0.18\textwidth}>{\raggedright\arraybackslash}p{0.17\textwidth}>{\raggedright\arraybackslash}p{0.24\textwidth}>{\raggedright\arraybackslash}p{0.30\textwidth}@{}}
\toprule
Model & Latent event & Route into the family & Structure vectors \\
\midrule
Binary regression & half-line & direct & signed design vectors $\pm\bm{x}_i$ \\
Multinomial choice & utility comparisons & direct & $\bm{z}_{ij}-\bm{z}_{ik}$ \\
Cumulative link & interval & direct & upper- and lower-endpoint vectors \\
Rankings & utility comparisons & direct & $\bm{z}_{ia}-\bm{z}_{ib}$ \\
Multivariate binary & orthant & direct & signed component vectors \\
Multivariate ordinal & rectangle & direct & cutpoint and predictor blocks \\
Sequential & signed stage bounds & direct & stop and continuation vectors $\pm\bm{z}_{ik}$ \\
\addlinespace
Adjacent-category logit & utility comparisons & exact choice representation & effective-utility comparisons \\
Stereotype, fixed scores & utility comparisons & constrained choice representation & $\bm{s}_{i,k}$ \\
\bottomrule
\end{tabular}
\caption{Members of the latent polyhedral family and their route into the framework. After the stated representation, Theorem~\ref{thm:main} governs existence for every row. Section~\ref{sec:uniqueness} supplies common sufficient conditions for uniqueness. Sequential models use stagewise blocks or a pooled threshold map when coefficients are shared. Stereotype scores are fixed.}
\label{tab:family}
\end{table}

\subsection[Separation and the Estimable Span]{Separation and the Estimable Span}

Let $\mathcal{S}=\{\bm{s}_1,\ldots,\bm{s}_{\nS}\}$ be a non-empty finite set of structure vectors and let $V=\mathrm{span}(\mathcal{S})$. We say that $\mathcal{S}$ has \textbf{complete separation} if there is a nonzero $\bm{d}\in V$ such that $\bm{s}_i'\bm{d}>0$, $\forall i\in[\nS]$.
It has \textbf{quasi-complete separation} if there is a nonzero $\bm{d}\in V$ with $\bm{s}_i'\bm{d}\geq0$ for every $i$ and equality for at least one $i$. Finally, $\mathcal{S}$ has \textbf{overlap} if neither form of separation is present. Equivalently, overlap means that every nonzero $\bm{d}\in V$ satisfies $\bm{s}_i'\bm{d}<0$ for at least one structure vector. 

The definition has a direct likelihood interpretation. Along the parameter path $\bm{\beta}+t\bm{d}$, the threshold associated with $\bm{s}_i$ changes by $t\bm{s}_i'\bm{d}$. A separating direction therefore weakly enlarges every observed latent event and strictly enlarges at least one relevant threshold. Whether this prevents a finite maximizer depends only on how the corresponding probabilities respond to such threshold changes. This distributional condition is stated in Section~\ref{sec:unified-existence}.

Restricting $\bm{d}$ to $V$ removes directions that do not affect the represented likelihood. Every $\bm{\beta}\in\R^p$ has a decomposition $\bm{\beta}=\bar{\bm{\beta}}+\widehat{\bm{\beta}}$ with $\bar{\bm{\beta}}\in V$ and $\widehat{\bm{\beta}}\in V^\perp$. Since every structure vector lies in $V$, all thresholds satisfy
$\bm{b}_{ijr}'\bm{\beta}=\bm{b}_{ijr}'\bar{\bm{\beta}}$. The likelihood is therefore constant along $V^\perp$. Existence is determined on $V$, and uniqueness can hold only for the estimable component $\bar{\bm{\beta}}$. The next section develops the cone geometry needed to characterize escape on this subspace.

\section{Existence of Maximum Likelihood Estimates}
\label{sec:existence}

\subsection{Geometry of Overlap}
\label{sec:overlap-geometry}

Let $K=\mathrm{cone}(\mathcal{S})$ and $V=\mathrm{span}(\mathcal{S})$. The first lemma records equivalent geometric descriptions of overlap. The second strengthens them to the uniform negative bound that controls the likelihood at infinity. For a set $K\subseteq\R^p$, write $\mathrm{ri}(K)$ for its relative interior \citep[e.g.,][]{polytomous:Rockafellar:1990}.

\begin{lemma}
  \label{lemma:cone-A}
  Let $\mathcal{S} = \{ \bm{s}_1, \ldots, \bm{s}_{\nS} \}$ be a non-empty finite set of vectors
  $\bm{s}_i \in \R^p$, and $K = \mathrm{cone}(\mathcal{S})$ and $V = \mathrm{span}(\mathcal{S})$,
  respectively, be the convex cone and subspace generated by~$\mathcal{S}$. Then
  the following are equivalent:
  \begin{enumerate}[(a)]
   \item There is no $\bm{\beta}$ such that $\bm{s}_i'\bm{\beta} \ge 0$ for all
    $i \in [\nS]$ with strict inequality for at least one~$i$.
   \item $0 \in \mathrm{ri}(K)$.
   \item There exist positive $\lambda_1, \ldots, \lambda_{\nS}$ such
    that $\lambda_1 \bm{s}_1 + \cdots + \lambda_{\nS} \bm{s}_{\nS} = \bm{0}$.
   \item $K = V$.
   \item For all partitions of $\mathcal{S}$ into non-empty subsets $\mathcal{S}_1$ and
    $\mathcal{S}_2$, the intersection of $\mathrm{ri}(\mathrm{cone}(\mathcal{S}_1))$ and
    $\mathrm{ri}(\mathrm{cone}(-\mathcal{S}_2))$ is non-empty.
  \end{enumerate}
\end{lemma}

\begin{proof}
  Suppose first that $0 \not\in \mathrm{ri}(K)$.  Then using
  Theorem~11.2 in \citet{polytomous:Rockafellar:1990} with $M = \{0\}$,
  there exists a hyperplane $H$ containing $0$ (i.e., passing through
  the origin) such that one of the open half spaces associated with $H$
  contains $\mathrm{ri}(K)$.  Thus (possibly replacing $\bm{\beta}$ by
  $-\bm{\beta}$), there exists $\bm{\beta}$ (trivially non-zero) such that
  $\bm{z}'\bm{\beta} > 0$ for all $\bm{z} \in \mathrm{ri}(K)$.  As $K =
  \mathrm{cl}(\mathrm{ri}(K))$
  \citep[][Theorem~6.3]{polytomous:Rockafellar:1990}, we must have
  $\bm{s}_i'\bm{\beta} \ge 0$ for all $i$.  If none of these inequalities were
  strict, necessarily $\bm{z}'\bm{\beta} = 0$ for all $\bm{z} \in \mathrm{ri}(K)$, which
  is not possible.  Hence, not (b) implies not (a).

  Conversely, suppose there exists $\bm{\beta}$ such that $\bm{s}_i'\bm{\beta} \ge 0$
  for all $i$ with at least one inequality strict.  Clearly, $\bm{z}'\bm{\beta}
  \ge 0$ for all $\bm{z} \in K$.  If $0$ were in $\mathrm{ri}(K)$, then
  \citep[][Theorem~6.4]{polytomous:Rockafellar:1990} for all $\bm{z} \in K$
  there exists $\mu > 1$ such that $(1 - \mu) \bm{x} + \mu 0 = (1 - \mu) \bm{x}
  \in K$.  Take $\bm{z} = \bm{s}_i$ with $\bm{s}_i ' \bm{\beta} > 0$.  Then if $\mu > 1$,
  $((1 - \mu) \bm{s}_i)' \bm{\beta} < 0$, hence $(1 - \mu)\bm{s}_i \not\in K$ and we
  cannot have $0 \in \mathrm{ri}(K)$.  Hence, not (a) implies not (b),
  establishing the equivalence of (a) and (b).

  Equivalence of (b) and (c) is straightforward from the fact that
  $\mathrm{ri}(K)$ is the set of positive linear combinations of the
  $\bm{s}_i$.

  If (c) holds, then for all $\bm{z} = \sum_i \alpha_i \bm{s}_i$ in
  $\mathrm{span}(\mathcal{S})$ we have $\bm{z} = \sum_i (\alpha_i + \mu \lambda_i) \bm{s}_i$
  which is in $K$ for all $\mu$ sufficiently large, hence (d).
  Conversely, suppose that $K = \mathrm{span}(\mathcal{S})$.  Then for all $i$,
  $-\bm{s}_i \in \mathrm{span}(\mathcal{S}) = K$ so that there are non-negative
  $\gamma_{ij}$ such that $-\bm{s}_i = \sum_j \gamma_{ij} \bm{s}_j$, and hence
    \begin{displaymath}
    \bm{0}
    = \sum_i \bm{s}_i + \sum_i (-\bm{s}_i)
    = \sum_j \bm{s}_j + \sum_i \sum_j \gamma_{ij}\bm{s}_j
    = \sum_j \left(1 + \sum_i \gamma_{ij} \right) \bm{s}_j
    = \sum_j \lambda_j \bm{s}_j
  \end{displaymath}
    with all $\lambda_j > 0$.  Hence, (d) implies (c).

  Finally, consider a partition $\mathcal{S}_1$ and $\mathcal{S}_2$ of $\mathcal{S}$ into non-empty
  subsets, and write $I_1$ and $I_2$ for the corresponding index sets
  (i.e., $\mathcal{S}_j = \{ \bm{s}_i : i \in I_j \}$ with the $I_j$ non-empty with
  union $\{1, \ldots, \nS\}$).  The intersection of
  $\mathrm{ri}(\mathrm{cone}(\mathcal{S}_1))$ and
  $\mathrm{ri}(\mathrm{cone}(-\mathcal{S}_2))$ is non-empty iff there are positive
  $\lambda_i$ such that
  \begin{displaymath}
    \sum_{i \in I_1} \lambda_i \bm{s}_i = \sum_{i \in I_2} \lambda_i (-\bm{s}_i),
  \end{displaymath}
  or equivalently, $\sum_i \lambda_i \bm{s}_i = \bm{0}$.  Thus, (e) is equivalent
  to (c), and the proof is complete.
\end{proof}

\begin{lemma}
\label{lemma:cone-B}
Let $\mathcal{S} = \{ \bm{s}_1, \ldots, \bm{s}_{\nS} \}$ be a non-empty finite set of vectors
$\bm{s}_i \in \R^p$, let $K = \mathrm{cone}(\mathcal{S})$ and $V=\mathrm{span}(\mathcal{S})$, and define
$\varphi_{\mathcal{S}}(\bm{\beta}) = \min_{i \in [\nS]} \bm{s}_i'\bm{\beta}$. Then the following are equivalent:
\begin{enumerate}[(a)]
\item $K = V$.
\item There exists $C < 0$ such that, for all $\bm{\beta} \in V$,
\[
\varphi_{\mathcal{S}}(\bm{\beta}) \leq C \| \bm{\beta} \|.
\]
\item For every $\bm{\beta} \in V \setminus \{\bm{0}\}$, there is an $i \in [\nS]$ with $\bm{s}_i'\bm{\beta}<0$.
\end{enumerate}
\end{lemma}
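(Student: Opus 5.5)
I will prove the cycle (a)$\Rightarrow$(c)$\Rightarrow$(b)$\Rightarrow$(a). The first step is the easiest, and it needs Lemma~\ref{lemma:cone-A} only for the intermediate step.

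\textbf{(a)$\Rightarrow$(c).} Assume $K=V$, and suppose some $\bm{\beta}\in V\setminus\{\bm 0\}$ had $\bm{s}_i'\bm{\beta}\ge 0$ for all $i$. By Lemma~\ref{lemma:cone-A}(a)$\Leftrightarrow$(d), no direction can have all inequalities nonnegative with one strict. Hence $\bm{s}_i'\bm{\beta}=0$ for every $i$, which places $\bm{\beta}\in V^\perp$. Since also $\bm{\beta}\in V$, we get $\bm{\beta}=\bm 0$, a contradiction.

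\textbf{(c)$\Rightarrow$(b).} This is a compactness argument. The function $\varphi_{\mathcal S}$ is a minimum of finitely many linear functions, so it is continuous and positively homogeneous of degree one. Let $\Sigma=\{\bm{\beta}\in V:\|\bm{\beta}\|=1\}$, the unit sphere of $V$, which is compact and nonempty provided $V\neq\{\bm 0\}$. By (c), $\varphi_{\mathcal S}<0$ on $\Sigma$, so its maximum $C=\max_\Sigma\varphi_{\mathcal S}$ is attained and satisfies $C<0$. Homogeneity then gives $\varphi_{\mathcal S}(\bm{\beta})=\|\bm{\beta}\|\,\varphi_{\mathcal S}(\bm{\beta}/\|\bm{\beta}\|)\le C\|\bm{\beta}\|$ for $\bm{\beta}\ne\bm 0$. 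At $\bm{\beta}=\bm 0$ both sides vanish.

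The degenerate case $V=\{\bm 0\}$, where every $\bm{s}_i=\bm 0$, must be handled separately. There (b) and (c) hold vacuously with any $C<0$, and $K=\{\bm 0\}=V$, so all three statements are true and the equivalence holds trivially.

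\textbf{(b)$\Rightarrow$(a).} First, (b) gives $\varphi_{\mathcal S}(\bm{\beta})<0$ for every nonzero $\bm{\beta}\in V$. I will show that condition (a) of Lemma~\ref{lemma:cone-A} holds for all $\bm{\beta}\in\R^p$, not just for $\bm{\beta}\in V$. Take an arbitrary $\bm{\beta}\in\R^p$ with $\bm{s}_i'\bm{\beta}\ge0$ for all $i$. Decompose it as $\bm{\beta}=\bar{\bm{\beta}}+\widehat{\bm{\beta}}$ with $\bar{\bm{\beta}}\in V$ and $\widehat{\bm{\beta}}\in V^\perp$. Because each $\bm{s}_i$ lies in $V$, we have $\bm{s}_i'\bm{\beta}=\bm{s}_i'\bar{\bm{\beta}}$, so $\varphi_{\mathcal S}(\bar{\bm{\beta}})\ge 0$. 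By (b) this forces $\bar{\bm{\beta}}=\bm 0$, and then all $\bm{s}_i'\bm{\beta}=0$, so none of the inequalities is strict. This is exactly Lemma~\ref{lemma:cone-A}(a), and that lemma's equivalence with (d) yields $K=V$.

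The only genuine obstacle is the projection onto $V$. Lemma~\ref{lemma:cone-A} quantifies over all $\bm{\beta}\in\R^p$, while the present statement quantifies over $\bm{\beta}\in V$. The orthogonal decomposition above bridges the two in both directions.
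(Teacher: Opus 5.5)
Your proof is correct and follows essentially the paper's route: Lemma~\ref{lemma:cone-A}, together with the orthogonal decomposition onto $V$, links (a) and (c), while compactness of the unit sphere of $V$ and positive homogeneity of $\varphi_{\mathcal S}$ give (c)$\Rightarrow$(b). The paper states these as two biconditionals rather than a cycle and leaves the projection step implicit, whereas you make it explicit.
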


\begin{proof}
If $V=\{0\}$, all three conditions hold trivially. Hence assume that $\dim(V)\geq 1$. On the subspace $V=\mathrm{span}(\mathcal{S})$, condition (a) is equivalent to the absence of $\bm{\beta}\in V\setminus\{\bm{0}\}$ with $\bm{s}_i'\bm{\beta}\ge0$ for all $i$, by Lemma~1.
For $\bm{\beta}\ne\bm{0}$, $\varphi_{\mathcal{S}}(\bm{\beta})/\|\bm{\beta}\|
=\varphi_{\mathcal{S}}(\bm{\beta}/\|\bm{\beta}\|)$, so (b) holds iff
$\max_{\bm{\beta}\in V:\|\bm{\beta}\|=1}\varphi_{\mathcal{S}}(\bm{\beta})<0$
(the maximum exists by compactness and continuity), which is equivalent to (c).
\end{proof}

Condition (c) in Lemma~\ref{lemma:cone-B} is the definition of overlap on $V$. Its uniform version in part (b) is the key analytic consequence. Every parameter vector far from the origin makes at least one observed threshold decrease at a rate proportional to its norm.

\subsection{Existence under Overlap}
\label{sec:unified-existence}

For the converse direction we need observed event probabilities to react strictly to effective threshold increases. A CDF $G$ on $\R^q$ is \emph{strictly threshold-increasing on its positive region} if
\begin{equation}
\label{eq:strict-threshold}
G(\bm{u})>0,\quad \bm{a}\in[0,\infty)^q\setminus\{\bm{0}\}
\quad\Longrightarrow\quad
G(\bm{u}+\bm{a})>G(\bm{u}).
\end{equation}

\begin{theorem}
  \label{thm:main}
  Consider the categorical response likelihood in Equation~\ref{eq:likelihood}, regarded as a function on $\R^p$. Suppose that every observed-category factor has the representation
  \[
  h_j(Z_i'\bm{\beta})=G_{ij}(B_{ij}'\bm{\beta}),\qquad i\in I_j,
  \]
  with pooled structure set $\mathcal{S}$ from Equation~\ref{eq:S}. If all $G_{ij}$ are continuous and $\mathcal{S}$ has overlap, then a finite MLE exists.

  If all $G_{ij}$ are continuous, the likelihood is positive at some finite parameter value,
  and every $G_{ij}$ is strictly threshold-increasing on its positive region 
  (Equation~\ref{eq:strict-threshold}), then a finite MLE exists if and only if $\mathcal{S}$ has overlap.
\end{theorem}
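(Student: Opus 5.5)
We work on $V=\mathrm{span}(\mathcal{S})$ throughout. Every threshold $\bm{b}_{ijr}'\bm{\beta}$ depends only on the projection $\bar{\bm{\beta}}$ onto $V$, so $L(\bm{\beta})=L(\bar{\bm{\beta}})$. Hence a finite maximizer on $V$ is a finite maximizer on $\R^p$. Conversely, a maximizer on $\R^p$ projects to one on $V$.

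\emph{Sufficiency.} Assume overlap. If $V=\{\bm{0}\}$, then $L$ is constant and any point is an MLE, so suppose $\dim V\ge1$. By Lemma~\ref{lemma:cone-B}(b) there is $C<0$ with $\min_{\bm{s}\in\mathcal{S}}\bm{s}'\bm{\beta}\le C\|\bm{\beta}\|$ for all $\bm{\beta}\in V$. Let $L^*=\sup_V L$. If $L^*=0$, every point is a maximizer. Otherwise fix $\bm{\beta}_0\in V$ with $L(\bm{\beta}_0)>0$ and set $\varepsilon=L(\bm{\beta}_0)$.

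Each factor satisfies $G_{ij}(\bm{t})\le F_{ijr}(t_r)$ for every coordinate $r$, where $F_{ijr}$ is the univariate marginal CDF of the $r$th coordinate of $\bm{W}_{ij}$, and $F_{ijr}(t)\to0$ as $t\to-\infty$. Also $L\le1$. Since $\mathcal{S}$ is finite, we can choose $T$ with $F_{ijr}(t)<\varepsilon$ for all $t\le -T$ and all index triples. For $\|\bm{\beta}\|>T/|C|$, some structure vector $\bm{b}_{ijr}$ has $\bm{b}_{ijr}'\bm{\beta}<-T$. The corresponding factor, with exponent $y_{ij}\ge1$, is then below $\varepsilon$, and all other factors are at most $1$. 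So $L(\bm{\beta})<\varepsilon$ outside the compact ball $\{\bm{\beta}\in V:\|\bm{\beta}\|\le T/|C|\}$. Continuity of $L$, which follows from continuity of the $G_{ij}$ composed with linear maps, gives a maximizer on that ball, and this maximizer is a global maximizer.

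\emph{Necessity.} Assume the stronger hypotheses and suppose $\mathcal{S}$ lacks overlap. Then there is a nonzero $\bm{d}\in V$ with $\bm{s}'\bm{d}\ge0$ for all $\bm{s}\in\mathcal{S}$. Because $\bm{d}\in V\setminus\{\bm{0}\}$, not all $\bm{s}'\bm{d}$ vanish, since otherwise $\bm{d}\perp V$. Hence some $\bm{s}'\bm{d}>0$. Let $\hat{\bm{\beta}}$ be any finite MLE. Its likelihood is positive because $L$ is positive somewhere, so every observed factor satisfies $G_{ij}(B_{ij}'\hat{\bm{\beta}})>0$.

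Move to $\hat{\bm{\beta}}+\bm{d}$. Each threshold vector shifts by $\bm{a}_{ij}=B_{ij}'\bm{d}\in[0,\infty)^{q_{ij}}$. By monotonicity of CDFs no factor decreases. For the factor containing the structure vector with $\bm{s}'\bm{d}>0$ we have $\bm{a}_{ij}\neq\bm{0}$, so Equation~\ref{eq:strict-threshold} gives a strict increase. Since $y_{ij}\ge1$ and all factors are positive, $L(\hat{\bm{\beta}}+\bm{d})>L(\hat{\bm{\beta}})$, which contradicts maximality. Therefore no finite MLE exists, and combining this with sufficiency gives the equivalence.

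\emph{Main obstacle.} The delicate step is sufficiency, because it needs a bound that is uniform in direction. We handle this with the norm-proportional bound from Lemma~\ref{lemma:cone-B}(b) and the finiteness of $\mathcal{S}$, which together give a single $T$. We also need each multivariate factor to be dominated by a univariate marginal tail, which we obtain from $\{\bm{W}\le\bm{t}\}\subseteq\{W_r\le t_r\}$.
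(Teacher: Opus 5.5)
Your proposal is correct and follows essentially the same route as the paper. Sufficiency goes through the uniform bound of Lemma~\ref{lemma:cone-B}(b), domination of each observed factor by a univariate marginal CDF, and continuity on a compact ball in $V$; necessity uses a separating direction $\bm{d}\in V$ that weakly raises every observed threshold and strictly raises one at a point of positive likelihood, and your explicit $\varepsilon$--$T$ bookkeeping and single step to $\hat{\bm{\beta}}+\bm{d}$ (rather than $\bm{\beta}_0+t\bm{d}$ for $t>0$) are only cosmetic differences.
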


\begin{proof}
Let $V=\mathrm{span}(\mathcal{S})$. For necessity, suppose that overlap fails. Then some $\bm{d}\in V\setminus\{\bm{0}\}$ satisfies $\bm{b}_{ijr}'\bm{d}\geq0$ for all observed $(i,j,r)$, with at least one strict inequality. Otherwise $\bm{d}$ would be orthogonal to its own spanning set. If $\bm{\beta}_0$ were a finite maximizer, positivity of the likelihood somewhere would imply $L(\bm{\beta}_0)>0$. With $\bm{a}_{ij}=B_{ij}'\bm{d}\geq\bm{0}$, coordinatewise monotonicity gives
\[
G_{ij}(B_{ij}'\bm{\beta}_0+t\bm{a}_{ij})
\geq G_{ij}(B_{ij}'\bm{\beta}_0),\qquad t>0,
\]
for every observed pair, and strict threshold monotonicity makes at least one inequality strict. Hence $L(\bm{\beta}_0+t\bm{d})>L(\bm{\beta}_0)$, a contradiction.

For sufficiency, assume overlap. Since the likelihood depends only on the inner products with vectors in $\mathcal{S}$,
$L(\bm{\beta})=L(\mathrm{Proj}_V\bm{\beta})$ and the supremum over $\R^p$ equals that over $V$. The identically zero case is trivial, so take a point with positive likelihood. For $\bm{b}=\bm{b}_{ijr}$, let $F_{\bm{b}}$ be the marginal CDF of the $r$th coordinate of $\bm{W}_{ij}$. Because all likelihood factors lie in $[0,1]$ and $y_{ij}\geq1$,
\[
L(\bm{\beta})\leq G_{ij}(B_{ij}'\bm{\beta})
\leq F_{\bm{b}}(\bm{b}'\bm{\beta}).
\]
Choose $\bm{b}^{*}(\bm{\beta})$ minimizing $\bm{b}'\bm{\beta}$ over $\mathcal{S}$. Lemma~2 gives $C<0$ with $(\bm{b}^{*}(\bm{\beta}))'\bm{\beta}\leq C\|\bm{\beta}\|$ on $V$. If $F^{(1)},\ldots,F^{(Q)}$ are the finitely many corresponding marginal CDFs, then
\[
L(\bm{\beta})\leq\max_{s=1,\ldots,Q}F^{(s)}(C\|\bm{\beta}\|)\longrightarrow0
\quad\text{as }\|\bm{\beta}\|\to\infty,\ \bm{\beta}\in V.
\]
Thus the supremum is attained on a sufficiently large compact ball in $V$, because the likelihood is continuous. This proves existence of a finite maximizer.
\end{proof}

The theorem separates geometry from probability. Under overlap, Lemma~\ref{lemma:cone-B} forces at least one threshold to tend to $-\infty$ along every sequence that diverges in $V=\mathrm{span}(\mathcal{S})$. The associated observed probability tends to zero, so the full likelihood tends to zero. Continuity then gives a maximizer on a compact subset of $V$. Conversely, a separating direction weakly raises every observed threshold and strictly raises at least one. Strict threshold monotonicity then improves the likelihood from every finite point. The argument requires neither likelihood concavity nor exchangeability.

If $V\subsetneq\R^p$, the likelihood is constant on the affine spaces $\bm{\beta}+V^\perp$. A maximizer on $V$ therefore generates the ambient maximizer set by adding arbitrary vectors in $V^\perp$. When the likelihood is positive somewhere, overlap also makes the maximizer set bounded on $V$. This formulation separates nonidentifiability from nonexistence.

\subsection{Model Consequences}
\label{sec:model-consequences}
\label{sec:extensions}

The model representations are already fixed in Section~\ref{sec:framework}; applying Theorem~\ref{thm:main} requires only the corresponding continuity and strict monotonicity conditions.

\noindent\textbf{Choice-type and binary models.}
For the choice representation in Section~\ref{sec:choice-models}, Theorem~\ref{thm:main} applies with $\mathcal{S}=\mathcal{S}_{\mathrm{CT}}$. Continuity of every utility-difference CDF $G_j$ makes overlap of $\mathcal{S}_{\mathrm{CT}}$ sufficient for a finite MLE. If each utility-difference distribution has a density that is positive on $\R^{m-1}$, every effective comparison threshold is strictly probability increasing, and existence is equivalent to overlap. In the binary case this recovers the classical criterion of \citet{polytomous:Albert+Anderson:1984} for the signed design vectors in Section~\ref{sec:choice-models}. Sufficiency continues to require only continuity and therefore includes heavy-tailed links such as cauchit. The support qualification is substantive. With one-sided or bounded support (e.g., with the log link), a separating path may eventually move a threshold beyond the support, after which the corresponding probability is constant. Separation can then coexist with a finite maximizer, although the maximizer set may be unbounded.

\noindent\textbf{Cumulative-link models.}
For the cumulative representation in Section~\ref{sec:cumulative-models}, use $\mathcal{S}=\mathcal{S}_{\mathrm{CL}}$. For an intermediate category,
$G_{ij}(u_1,u_2)=\P(-u_2\leq\epsilon_i\leq u_1)$. This is continuous when $F$ is continuous, and a density positive on $\R$ implies strict threshold monotonicity whenever the interval has positive probability. The boundary categories are the one-dimensional cases.

To recover the ordered parameter space, suppose $\mathcal{S}_{\mathrm{CL}}$ has overlap. If some response category is empty, the nonempty and empty categories must meet at some boundary $r\in[m-1]$. If $I_r\neq\varnothing$ and $I_{r+1}=\varnothing$, let $\bm{d}_0$ have threshold component $\bm{e}_r$ and zero regression component. If $I_r=\varnothing$ and $I_{r+1}\neq\varnothing$, use threshold component $-\bm{e}_r$ instead. In either case, every $\bm{s}\in\mathcal{S}_{\mathrm{CL}}$ satisfies $\bm{s}'\bm{d}_0\geq0$, and at least one inequality is strict. With $V=\mathrm{span}(\mathcal{S}_{\mathrm{CL}})$ and $\bm{d}=\mathrm{Proj}_V\bm{d}_0$, we have $\bm{s}'\bm{d}=\bm{s}'\bm{d}_0$ for all $\bm{s}\in\mathcal{S}_{\mathrm{CL}}$. The strict inequality also implies $\bm{d}\neq\bm{0}$. This contradicts overlap. Hence $I_j\neq\varnothing$ for every $j\in[m]$.

Define the interval CDF for arbitrary thresholds by assigning probability zero when the lower endpoint is not below the upper endpoint. The resulting extended likelihood is continuous on $\R^p$. If its maximum is positive, every category probability at a maximizer is positive, and the presence of every category implies $\alpha_1<\cdots<\alpha_{m-1}$. Thus the maximizer already lies in the original ordered parameter space. If the extended likelihood is identically zero, any finite ordered parameter is a maximizer. For necessity, a separating direction improves the likelihood locally. Because the ordered parameter space is open, $\bm{\beta}+t\bm{d}$ remains ordered for all sufficiently small $t>0$. Therefore Theorem~1 gives the characterization for cumulative logit, ordered probit, and cumulative cloglog without the likelihood-concavity and full-rank assumptions used by \citet{polytomous:Tse:1986}.

\noindent\textbf{Other direct members.}
The ranking, multivariate binary, multivariate ordinal, and sequential representations in Section~\ref{sec:direct-members} enter Theorem~\ref{thm:main} without another reduction. Continuity of the relevant comparison, orthant, rectangle, or stage-event probabilities gives sufficiency of overlap. A positive joint latent density gives strict threshold monotonicity and hence necessity for the full-row-rank and coordinate-bound representations used here. A partially observed ranking requires an effective comparison system, obtained by removing probability-redundant inequalities or by verifying strict threshold monotonicity directly. The same ordering argument applies coordinatewise to multivariate ordinal models. If one marginal category of a response coordinate is empty, changing the adjacent cutpoint in the favorable direction gives a separating vector for every observed joint pattern. Under overlap, every marginal category is therefore represented. A positive joint likelihood then forces the lower and upper cutpoints selected by every observed coordinate category to be strictly ordered. Thus the extended rectangle likelihood and the original ordered-cutpoint likelihood have the same maximizers. For sequential models with stage-specific parameter blocks, the pooled structure set is a direct sum of the stagewise signed vectors, so overlap is equivalent to overlap at every stage. When coefficients are shared, all stop and continuation vectors are pooled in the common parameter space.

Dependence among latent components changes the event probabilities but not the structure vectors, so it does not alter the existence criterion when the required continuity and positivity conditions hold.

\noindent\textbf{Ordinal models with choice representations.}
The adjacent-category logit and fixed-score stereotype models in
Section~\ref{sec:choice-ordinal} use exact choice representations and
therefore inherit the choice conclusion with their own effective
comparison vectors. For adjacent-category logit these are differences
of the effective utilities $\Gamma_{ij}$. For the fixed-score
stereotype model they are the constrained vectors $\bm{s}_{i,k}$.
No encompassing unconstrained model is needed for either exact
characterization.

When the stereotype scores are estimated jointly with the regression
coefficients, the threshold map is bilinear and the model falls outside
the present linear-threshold framework. Nevertheless, it is a
constrained submodel of the encompassing multinomial logit model.
Consequently, overlap in the encompassing model is sufficient for the
existence of a finite stereotype-model maximizer, provided the
normalized stereotype parameter space is closed (e.g., the usual \(
0=\phi_1\leq\phi_2\leq\cdots\leq\phi_m=1
\)).

\section[Detection and Geometry of Separation]{Detection and Geometry of Separation}
\label{sec:diagnostics}

\subsection{Linear-Programming Detection}
\label{comp-detect}

Let $S$ be the matrix whose rows are the pooled structure vectors in $\mathcal{S}$. Separation can be detected by solving the ordinary primal problem
\begin{gather}
\label{lp1}
\text{maximize}\quad \bm{1}'\bm{t} \\
\text{subject to}\quad S\bm{\beta}-\bm{t}=\bm{0},\qquad
\bm{t}\geq\bm{0}.\nonumber
\end{gather}
A feasible point has $\bm{t}=S\bm{\beta}$. Thus the linear program is unbounded if and only if there is a separating direction, that is, some $\bm{\beta}$ with $S\bm{\beta}\geq\bm{0}$ and at least one strict inequality. Multiplying such a direction by arbitrarily large positive constants makes the objective diverge. Under overlap, no such direction exists and the only feasible value of $\bm{t}$ is $\bm{0}$.

Through duality theory of linear programming \citep[e.g.,][Ch.~7.4]{polytomous:Schrijver:1986}, after a standard sign change of the dual variables, the corresponding dual feasibility problem is
\begin{gather*}
\text{minimize}\quad \bm{0}'\bm{u} \\
\text{subject to}\quad S'\bm{u}=\bm{0},\qquad
\bm{u}\geq\bm{1}.
\end{gather*}
The unboundedness of the primal problem is equivalent to infeasibility of this dual problem. Hence overlap is equivalent to the existence of a vector $\bm{u}\geq\bm{1}$ satisfying $S'\bm{u}=\bm{0}$. Because any strictly positive null vector can be rescaled to satisfy $\bm{u}\geq\bm{1}$, Lemma~\ref{lemma:cone-A}(c) shows that this is exactly the positive-combination characterization of overlap.

For numerical use we solve the equivalent, always-feasible formulation
\begin{gather}
\label{lp2}
\text{minimize}\quad y \\
\text{subject to}\quad S'\bm{u}=\bm{0},\qquad
\bm{u}+y\bm{1}\geq\bm{1},\qquad y\geq0.\nonumber
\end{gather}
Overlap holds if and only if the optimal value is zero. At $y=0$, Equation~\ref{lp2} is exactly the dual feasibility condition above; if no such strictly positive null vector exists, the slack $y$ records the violation.

The formulation works with the structure-vector matrix rather than with a model-specific design matrix. It can therefore be used across the model classes introduced in Section~\ref{sec:framework}. Related linear programs for log-linear count models were developed by \citet{Fienberg+Rinaldo:2012} using the marginal cone. Variants of Equation~\ref{lp2} can also localize observations and model terms associated with separation, as illustrated in Section~\ref{sec:application} and shown in detail by \citet{sep-JSS}.

\subsection{The Cone of Separating Directions}
\label{sec:dim}

Let $K=\mathrm{cone}(\mathcal{S})$ and let $L=K\cap(-K)$ be its lineality space. The lineality space is the largest subspace contained in $K$. It records directions represented by the cone in both signs. The dimension of the lineality space determines the dimension of the dual cone of directions along which all observed thresholds can move favorably.

\begin{lemma}
\label{lemma:dimension}
Let $\mathcal{S}$ be a finite non-empty set of vectors in $\R^p$, let
$V=\mathrm{span}(\mathcal{S})$, and let $K=\mathrm{cone}(\mathcal{S})\subseteq V$. Write $K=L+P$, where $L=K\cap(-K)$ is the lineality space, $P$ is pointed, and $L\cap P=\{\bm{0}\}$. Then:
\begin{enumerate}[(a)]
\item Under overlap, $L=V$.
\item Under complete separation, $L=\{\bm{0}\}$.
\item Under quasi-complete separation, the cone of separating directions
\[
R=\{\bm{\gamma}\in V:\bm{s}'\bm{\gamma}\geq0
\text{ for every }\bm{s}\in\mathcal{S}\}
\]
has dimension $\dim(R)=\dim(K)-\dim(L)$.
\end{enumerate}
\end{lemma}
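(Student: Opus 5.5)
We verify the three parts in order. Parts (a) and (b) follow quickly from Lemma~\ref{lemma:cone-B} and the definition of the lineality space. For part (c), the idea is to identify $R$ as the dual cone of $K$ taken inside $V$ and to compute its dimension.

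For (a), overlap is condition (c) of Lemma~\ref{lemma:cone-B}, so that lemma gives $K=V$. Since $V$ is a subspace, $-K=V$ as well, and therefore $L=K\cap(-K)=V$.

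For (b), take $\bm{d}\in V$ with $\bm{s}_i'\bm{d}>0$ for every $i$. Let $\bm{x}\in L$. Since both $\bm{x}$ and $-\bm{x}$ lie in $K$, and every element of $K$ has nonnegative inner product with $\bm{d}$, we get $\bm{x}'\bm{d}=0$. Write $\bm{x}=\sum_i\lambda_i\bm{s}_i$ with $\lambda_i\ge0$. Then $0=\bm{x}'\bm{d}=\sum_i\lambda_i\bm{s}_i'\bm{d}$ is a sum of nonnegative terms, each of which vanishes only when $\lambda_i=0$. Hence every $\lambda_i=0$, so $\bm{x}=\bm{0}$ and $L=\{\bm{0}\}$.

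For (c), work inside $V$ and set $W=V\cap L^\perp$. Since $K$ spans $V$, we have $\dim K=\dim V$, so the goal is to show $\dim R=\dim W=\dim V-\dim L$. The argument has three steps.

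First, $R\subseteq W$. If $\bm{\gamma}\in R$ and $\bm{x}\in L$, then $\bm{x},-\bm{x}\in K$ forces $\bm{\gamma}'\bm{x}\ge0$ and $-\bm{\gamma}'\bm{x}\ge0$, so $\bm{\gamma}'\bm{x}=0$.

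Second, normalize the decomposition. Replace the given $P$ by $P_0=\mathrm{Proj}_{L^\perp}K$. This cone equals $K\cap L^\perp$, because $K+L=K$. It is finitely generated by the projections $\bm{p}_i$ of the $\bm{s}_i$, and we discard those $\bm{p}_i$ that are zero. The cone $P_0$ is pointed: any line in $P_0$ would lie in $K\cap L^\perp$ and hence in $L\cap L^\perp=\{\bm{0}\}$. Moreover $P_0$ spans $W$, because $V=\mathrm{span}(K)=L+\mathrm{span}(P_0)$.

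Third, find an interior point of $R$ relative to $W$; this is the main obstacle. We need $\bm{\gamma}\in W$ with $\bm{p}_i'\bm{\gamma}>0$ for every nonzero generator $\bm{p}_i$. Lemma~\ref{lemma:cone-A} alone only yields weak inequalities with one strict, so instead we use pointedness directly. Since $P_0$ is pointed, $\bm{0}\notin\mathrm{conv}\{\bm{p}_i\}$: a convex combination equal to zero would give $\bm{p}_j=-\sum_{i\ne j}(\lambda_i/\lambda_j)\bm{p}_i\in -P_0$ for some $\lambda_j>0$, and then $\bm{p}_j=\bm{0}$ by pointedness, which we have excluded. Strict separation of the compact convex set $\mathrm{conv}\{\bm{p}_i\}$ from the origin within $W$ then gives the required $\bm{\gamma}\in W$.

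It remains to check that a neighbourhood of $\bm{\gamma}$ lies in $R$. By finiteness, every $\bm{\gamma}^*$ in a small neighbourhood of $\bm{\gamma}$ in $W$ still satisfies $\bm{p}_i'\bm{\gamma}^*>0$ for all $i$. Each $\bm{s}_i$ decomposes as $\bm{s}_i=\bm{l}_i+\bm{p}_i$ with $\bm{l}_i\in L\perp\bm{\gamma}^*$, so $\bm{s}_i'\bm{\gamma}^*=\bm{p}_i'\bm{\gamma}^*\ge0$, where zero occurs exactly when $\bm{p}_i=\bm{0}$. Hence this neighbourhood is contained in $R$. Combined with $R\subseteq W$, this gives $\dim R=\dim W=\dim K-\dim L$.
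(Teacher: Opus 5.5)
Your proof is correct. Parts (a) and (b) match the paper's argument; your coefficient argument in (b) makes precise the paper's remark that $K$ lies in an open half-space, which is literally true only of $K\setminus\{\bm{0}\}$.

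For (c) you take a different route. The paper identifies $R$ as the dual cone of $K$ in $V$ and asserts in one line that its span is $L^\perp\cap V$, because $L$ is exactly the subspace annihilated by $R$. That is the standard duality fact $\mathrm{lin}(K)=(K^{*})^{\perp}$, and its nontrivial inclusion silently uses the bipolar theorem (Farkas) for the finitely generated cone $K$.

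You instead prove the span statement directly. After showing $R\subseteq W=V\cap L^\perp$, you pass to the canonical pointed part $P_0=K\cap L^\perp$ and show it spans $W$. You then exhibit a point of $R$ that is interior relative to $W$ by strictly separating $\mathrm{conv}\{\bm{p}_i\}$ from the origin.

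The paper's version is shorter and leans on cited cone duality. Yours needs only point-versus-compact-set separation. It also makes explicit what the relative interior of $R$ looks like: directions that strictly increase every threshold whose structure vector lies outside $L$ and leave those in $L$ fixed. That is exactly the information one wants when localizing the observations implicated in separation.

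Neither argument uses the quasi-complete hypothesis. The one case your construction does not literally cover is all $\bm{p}_i=\bm{0}$. That case means $L=V$, which quasi-complete separation excludes by Lemma~\ref{lemma:cone-B}, and in any case it gives $R=W=\{\bm{0}\}$.
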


\begin{proof}
  For (a), overlap on $V$ implies $K = V$ by Lemma~2, hence $L = V$.
   For (b), complete separation means there exists a $\bm{\beta} \in V$ such that $\bm{s}'\bm{\beta} > 0$ for all $\bm{s} \in \mathcal{S}$. This implies $K$ lies strictly within an open half-space of $V$. If a non-zero $\bm{v}$ were in $L$, then both $\bm{v}$ and $-\bm{v}$ are in $K$, leading to the contradiction $\bm{v}'\bm{\beta} > 0$ and $(-\bm{v})'\bm{\beta} > 0$. Thus, $L=\{\bm{0}\}$.
   For (c), \(R\) is the dual cone of \(K\) in \(V\). Its span is
   \(L^\perp\cap V\), since the lineality space \(L=K\cap(-K)\) is exactly the subspace annihilated by all separating directions. Hence
   \[
   \dim(R)=\dim(V)-\dim(L)=\dim(K)-\dim(L),
   \] because \(K\) spans \(V\).
\end{proof}

The dimension of $R$ is the number of linearly independent separating directions. It describes the multiplicity of separation, not its geometric margin or numerical severity. Complete separation has zero lineality, while quasi-complete separation can retain a nontrivial lineality space.

A useful monotonicity observation follows directly. If a subset of the data has overlap and its structure vectors span the same subspace as the full sample, then the full sample also has overlap. This can provide an early certificate in sequential or distributed screening. Separation in a subset is not conclusive because additional observations can restore overlap.

\section[Strict Log-Concavity and Uniqueness]{Strict Log-Concavity and Uniqueness}
\label{sec:uniqueness}

Having established conditions for existence, we now turn to uniqueness. The latent polyhedral representation makes it possible to derive strict log-concavity from the primitive latent distribution. We first recall a result for strictly log-concave probability measures. We then apply it to parameterized polyhedra and finally transfer the resulting strictness to the identifiable parameter space.

\subsection{Strict Log-Concavity of Latent Polyhedral Probabilities}
\label{sec:logcmeasures}

Let $D$ be a convex subset of $\R^q$. A function $g:D\to\R_+$ is strictly log-concave on $D$ if, for distinct $\bm{u},\bm{v}\in D$ and every $0<\lambda<1$,
\begin{equation}
 g(\lambda\bm{u}+(1-\lambda)\bm{v})
 >
 g(\bm{u})^\lambda g(\bm{v})^{1-\lambda}.
\end{equation}
The corresponding setwise property is supplied by the following result. For sets $A$ and $B$, write
$A\mathbin{\triangle}B=(A\setminus B)\cup(B\setminus A)$ for their symmetric difference.

\citet[Theorem~1]{Aristodemou2023} proves that a positive and
strictly log-concave density induces a strictly log-concave probability
measure. We record the result in the probability-distinct form used
below. The condition
$\P(A\mathbin{\triangle}B)>0$ in particular implies that $A\neq B$.

\begin{theorem}
\label{thm:aristodemou}
Let $f$ be a positive and strictly log-concave density on an open
convex set $D\subset\R^r$, and let $\P$ be its probability measure.
For convex sets $A,B\subseteq D$ with $\P(A)>0$, $\P(B)>0$, and
$\P(A\mathbin{\triangle}B)>0$, every $0<\lambda<1$ satisfies
\begin{equation}
 \P\!\left(\lambda A+(1-\lambda)B\right)
 >
 \P(A)^\lambda\P(B)^{1-\lambda}.
\end{equation}
\end{theorem}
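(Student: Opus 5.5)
We would prove the statement through the Prékopa--Leindler inequality, strengthened by its equality analysis. Fix convex $A,B\subseteq D$ with $\P(A),\P(B)>0$ and $\P(A\mathbin{\triangle}B)>0$, and fix $0<\lambda<1$. Set $C=\lambda A+(1-\lambda)B$, which is convex and contained in $D$. Define
\[
g=f\mathbf{1}_A,\qquad k=f\mathbf{1}_B,\qquad m=f\mathbf{1}_C .
\]
Take $\bm{x}\in A$ and $\bm{y}\in B$. Then $\lambda\bm{x}+(1-\lambda)\bm{y}\in C$, and log-concavity of $f$ gives
\[
m(\lambda\bm{x}+(1-\lambda)\bm{y})\geq g(\bm{x})^\lambda k(\bm{y})^{1-\lambda}.
\]
The same bound holds trivially when $\bm{x}\notin A$ or $\bm{y}\notin B$, since the right side is then zero. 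Prékopa--Leindler therefore yields the weak inequality $\P(C)\geq\P(A)^\lambda\P(B)^{1-\lambda}$. The whole task is to make this inequality strict.

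For strictness we would use strict log-concavity of $f$ directly, rather than the delicate equality cases of Prékopa--Leindler. We plan to run the one-dimensional induction proof of Prékopa--Leindler via transport, or equivalently a Knothe map, and to track where the inequality is lost.

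Let $T:A\to B$ be the Knothe (triangular) map pushing the normalized measure $f\mathbf{1}_A/\P(A)$ onto $f\mathbf{1}_B/\P(B)$. Its Jacobian is triangular with positive diagonal entries. Set $S=\lambda\,\mathrm{id}+(1-\lambda)T$, which maps $A$ injectively into $C$. A change of variables gives
\[
\P(C)\geq\int_A f(S\bm{x})\det DS(\bm{x})\,d\bm{x}.
\]
Applying AM--GM to the diagonal Jacobian entries bounds $\det DS\geq(\det DT)^{1-\lambda}$. Strict log-concavity gives
\[
f(S\bm{x})\geq f(\bm{x})^\lambda f(T\bm{x})^{1-\lambda},
\]
with strict inequality wherever $T\bm{x}\neq\bm{x}$. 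The Monge--Amp\`ere relation
\[
f(T\bm{x})\det DT(\bm{x})=\frac{\P(B)}{\P(A)}\,f(\bm{x})
\]
then collapses the lower bound to exactly $\P(A)^\lambda\P(B)^{1-\lambda}$.

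It remains to show that $T\bm{x}\neq\bm{x}$ on a set of positive measure. Suppose instead that $T=\mathrm{id}$ almost everywhere on $A$. Then the two normalized measures coincide, so $\mathbf{1}_A/\P(A)=\mathbf{1}_B/\P(B)$ almost everywhere, because $f>0$. This forces $A=B$ up to null sets, contradicting $\P(A\mathbin{\triangle}B)>0$. Since the integrand gap is strictly positive on a set of positive measure, the inequality is strict.

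The main obstacle is regularity: Knothe maps of indicator-weighted densities need not be smooth. We would handle this by working with the monotone one-dimensional rearrangements coordinate by coordinate, where the maps are monotone and hence differentiable almost everywhere. Change of variables for monotone maps supplies the needed inequality in the direction we want, namely $\P(C)\geq\int\ldots$. If this bookkeeping proved too heavy, we would fall back on citing \citet[Theorem~1]{Aristodemou2023}. That result is stated for distinct sets, and our positivity assumption $\P(A\mathbin{\triangle}B)>0$ rules out sets that differ only by null sets, which are the degenerate cases there.
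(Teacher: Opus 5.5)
Your proof is correct, and it takes a genuinely different route from the paper. The paper gives no argument of its own: it invokes \citet[Theorem~1]{Aristodemou2023} and connects to it only through the remark that $\P(A\mathbin{\triangle}B)>0$ implies $A\neq B$. You instead rerun the Knothe-transport proof of Pr\'ekopa--Leindler for $f\mathbf{1}_A$, $f\mathbf{1}_B$, $f\mathbf{1}_C$ and read off strictness pointwise. The hypothesis then enters exactly as $f\mathbf{1}_A/\P(A)\neq f\mathbf{1}_B/\P(B)$, that is, as $T\neq\mathrm{id}$ on a set of positive measure.

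What your route buys is a self-contained argument that bypasses the equality theory of Pr\'ekopa--Leindler. It also shows why the measure-theoretic hypothesis is the right one. Literal distinctness does not suffice: for $A$ open and $B=\bar A$ one has $\lambda A+(1-\lambda)B=A$, so equality holds. The paper's connecting remark therefore works only if the cited result's distinctness is read modulo null sets, which is exactly the reading in your fallback remark.

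What the citation buys is brevity and freedom from regularity issues. In your setting those issues are benign. Since $\log f$ is finite and concave on the open set $D$, $f$ is continuous. Convex sets of positive probability have nonempty interior, so every conditional density in the Knothe construction is continuous and positive on an open fiber interval. The one-dimensional rearrangements are then $C^1$ and strictly increasing in their own variable, and the Monge--Amp\`ere identity holds pointwise on $\mathrm{int}\,A$. The inequality $\P(C)\geq\int_A f(S\bm{x})\det DS(\bm{x})\,d\bm{x}$ follows by substituting one coordinate at a time, from the last to the first, after enlarging each fiber image to the corresponding fiber of $C$.

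Your opening Pr\'ekopa--Leindler paragraph is redundant, since the transport computation already reproves the weak inequality.
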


For a fixed matrix $H\in\R^{q\times r}$ and a latent vector $\bm{U}$ supported on $D$, define
\begin{equation}
\label{eq:threshold-polyhedron}
 \mathcal{C}_{H}(\bm{b})
 =\{\bm{u}\in D:H\bm{u}\leq\bm{b}\},
 \qquad
 g_H(\bm{b})=\P\{\bm{U}\in\mathcal{C}_{H}(\bm{b})\}.
\end{equation}
Write $\mathcal{B}_{H}^{+}=\{\bm{b}:g_H(\bm{b})>0\}$ for the positive threshold region.

\begin{definition}[Threshold-identifiable representation]
\label{def:threshold-identifiable}
The representation in Equation~\ref{eq:threshold-polyhedron} is threshold-identifiable on $\mathcal{B}_{H}^{+}$ if, for all distinct $\bm{b}_0,\bm{b}_1\in\mathcal{B}_{H}^{+}$,
\begin{equation}
 \P\{\mathcal{C}_{H}(\bm{b}_0)\mathbin{\triangle}
       \mathcal{C}_{H}(\bm{b}_1)\}>0.
\end{equation}
\end{definition}

Threshold identifiability rules out threshold coordinates that do not change the latent event on its positive region. It is weaker than full row rank of $H$ and therefore also permits singular transformed latent vectors.

\begin{theorem}
\label{thm:polyhedral-logconcavity}
Let $\bm{U}$ have a positive and strictly log-concave density on an open convex set $D\subset\R^r$. If the representation in Equation~\ref{eq:threshold-polyhedron} is threshold-identifiable on $\mathcal{B}_{H}^{+}$, then $\mathcal{B}_{H}^{+}$ is convex and $g_H$ is strictly log-concave on $\mathcal{B}_{H}^{+}$.
\end{theorem}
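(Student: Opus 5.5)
The plan is to reduce everything to Theorem~\ref{thm:aristodemou} through one elementary inclusion. Take $\bm{b}_0,\bm{b}_1\in\mathcal{B}_H^{+}$, fix $0<\lambda<1$, and set $\bm{b}_\lambda=\lambda\bm{b}_0+(1-\lambda)\bm{b}_1$. Write $A=\mathcal{C}_H(\bm{b}_0)$ and $B=\mathcal{C}_H(\bm{b}_1)$. The key step is the Minkowski inclusion
\[
\lambda A+(1-\lambda)B\subseteq\mathcal{C}_H(\bm{b}_\lambda).
\]
It holds because, for $\bm{u}\in A$ and $\bm{v}\in B$, the point $\lambda\bm{u}+(1-\lambda)\bm{v}$ lies in $D$ by convexity of $D$. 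Linearity of $H$ then gives $H(\lambda\bm{u}+(1-\lambda)\bm{v})\leq\lambda\bm{b}_0+(1-\lambda)\bm{b}_1$ coordinatewise. Both $A$ and $B$ are convex, since each is the intersection of the convex set $D$ with a polyhedron, and both are subsets of $D$. Measurability of $\lambda A+(1-\lambda)B$ is not an issue: it is convex, so it differs from a Borel set by a Lebesgue-null set, and this is implicitly the setting of Theorem~\ref{thm:aristodemou}.

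Convexity of $\mathcal{B}_H^{+}$ follows first. Even without strictness, the (non-strict) Prékopa--Leindler inequality, or Theorem~\ref{thm:aristodemou} applied when the symmetric difference has positive probability, gives
\[
g_H(\bm{b}_\lambda)\ge\P(\lambda A+(1-\lambda)B)\ge g_H(\bm{b}_0)^\lambda g_H(\bm{b}_1)^{1-\lambda}>0.
\]
In the degenerate case $\P(A\mathbin{\triangle}B)=0$, monotonicity of $\P$ and $\lambda A+(1-\lambda)B\supseteq A\cap B$ suffice. Then $\P(A\cap B)=\P(A)>0$, so $g_H(\bm{b}_\lambda)\ge\P(A)>0$. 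This degenerate case is only needed to establish convexity before identifiability is invoked; for distinct thresholds it is in fact excluded. Hence $\bm{b}_\lambda\in\mathcal{B}_H^{+}$.

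For strict log-concavity, take $\bm{b}_0\ne\bm{b}_1$ in $\mathcal{B}_H^{+}$. Threshold identifiability (Definition~\ref{def:threshold-identifiable}) gives $\P(A\mathbin{\triangle}B)>0$, and $\P(A),\P(B)>0$ hold by definition of $\mathcal{B}_H^{+}$. Theorem~\ref{thm:aristodemou} then applies directly, and combining it with the inclusion and monotonicity yields
\[
g_H(\bm{b}_\lambda)\ge\P\bigl(\lambda A+(1-\lambda)B\bigr)>g_H(\bm{b}_0)^\lambda g_H(\bm{b}_1)^{1-\lambda},
\]
which is the claim.

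The main obstacle is making sure the hypotheses of Theorem~\ref{thm:aristodemou} are met exactly. This means confirming that the sets are convex subsets of the open convex set $D$, that positivity of probabilities is inherited from $\mathcal{B}_H^{+}$, and that the probability-distinctness condition is precisely what threshold identifiability supplies. It also means ensuring that singular $H$ (for instance the interval representation with $H=(1,-1)'$) causes no trouble. It does not, because the argument never uses the law of $H\bm{U}$, only events in the primitive latent space $D$, where the density exists.
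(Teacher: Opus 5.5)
Your proof is correct and follows the paper's argument: the Minkowski inclusion $\lambda\mathcal{C}_H(\bm{b}_0)+(1-\lambda)\mathcal{C}_H(\bm{b}_1)\subseteq\mathcal{C}_H(\bm{b}_\lambda)$, then Theorem~\ref{thm:aristodemou} (made applicable by threshold identifiability), then monotonicity of $\P$. The only difference is that the paper reads off convexity of $\mathcal{B}_H^{+}$ directly from the strict inequality, the case $\bm{b}_0=\bm{b}_1$ being trivial, so your separate Pr\'ekopa--Leindler and degenerate-case step is valid but unnecessary.
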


\begin{proof}
Take distinct
$\bm b_0,\bm b_1\in\mathcal B_H^{+}$
and let $0<\lambda<1$. Define
\[
\bm b_\lambda
=
\lambda\bm b_0+(1-\lambda)\bm b_1.
\]

We first compare the corresponding latent polyhedra. Let
\[
\bm u_0\in\mathcal C_H(\bm b_0),
\qquad
\bm u_1\in\mathcal C_H(\bm b_1).
\]
By definition of $\mathcal C_H(\bm b)$, this means
\[
\bm u_0,\bm u_1\in D,
\qquad
H\bm u_0\leq\bm b_0,
\qquad
H\bm u_1\leq\bm b_1.
\]
Since $D$ is convex,
\[
\bm u_\lambda
=
\lambda\bm u_0+(1-\lambda)\bm u_1
\in D.
\]
Moreover, linearity of $H$ gives
\[
H\bm u_\lambda
=
H\{\lambda\bm u_0+(1-\lambda)\bm u_1\}
=
\lambda H\bm u_0+(1-\lambda)H\bm u_1.
\]
Because $0<\lambda<1$, multiplication of the two componentwise
inequalities above by $\lambda$ and $1-\lambda$, respectively,
preserves their directions. Hence
\[
\lambda H\bm u_0+(1-\lambda)H\bm u_1
\leq
\lambda\bm b_0+(1-\lambda)\bm b_1
=
\bm b_\lambda.
\]
Thus
\[
\bm u_\lambda\in\mathcal C_H(\bm b_\lambda).
\]

Since $\bm u_0$ and $\bm u_1$ were arbitrary, every element of the
Minkowski combination
\[
\lambda\mathcal C_H(\bm b_0)
+
(1-\lambda)\mathcal C_H(\bm b_1)
\]
belongs to $\mathcal C_H(\bm b_\lambda)$. Therefore,
\begin{equation} \label{S1}
    \lambda\mathcal C_H(\bm b_0)
+
(1-\lambda)\mathcal C_H(\bm b_1)
\subseteq
\mathcal C_H(\bm b_\lambda).
\end{equation}

Next, because
$\bm b_0,\bm b_1\in\mathcal B_H^{+}$,
both endpoint events have positive probability:
\[
\P\{\mathcal C_H(\bm b_0)\}>0,
\qquad
\P\{\mathcal C_H(\bm b_1)\}>0.
\]
Since $\bm b_0\neq\bm b_1$, threshold identifiability implies
\[
\P\!\left\{
\mathcal C_H(\bm b_0)
\mathbin{\triangle}
\mathcal C_H(\bm b_1)
\right\}>0.
\]
The sets $\mathcal C_H(\bm b_0)$ and
$\mathcal C_H(\bm b_1)$ are convex subsets of $D$, so
Theorem~2 applies and yields
\begin{equation} \label{S2}
    \P\!\left\{
\lambda\mathcal C_H(\bm b_0)
+
(1-\lambda)\mathcal C_H(\bm b_1)
\right\}
>
\P\{\mathcal C_H(\bm b_0)\}^{\lambda}
\P\{\mathcal C_H(\bm b_1)\}^{1-\lambda}.
\end{equation}

Finally, probability is monotone under set inclusion. Applying this to
\eqref{S1} gives
\[
\P\{\mathcal C_H(\bm b_\lambda)\}
\geq
\P\!\left\{
\lambda\mathcal C_H(\bm b_0)
+
(1-\lambda)\mathcal C_H(\bm b_1)
\right\}.
\]
Combining this inequality with \eqref{S2}, and recalling that
\[
g_H(\bm b)
=
\P\{\mathcal C_H(\bm b)\},
\]
we obtain
\[
g_H(\bm b_\lambda)
>
g_H(\bm b_0)^\lambda
g_H(\bm b_1)^{1-\lambda}
>
0.
\]
In particular, $\bm b_\lambda\in\mathcal B_H^{+}$. Since this holds
for every $\bm b_0,\bm b_1\in\mathcal B_H^{+}$ and
$0<\lambda<1$, the set $\mathcal B_H^{+}$ is convex. The displayed
strict inequality is precisely strict log-concavity of $g_H$ on
$\mathcal B_H^{+}$.
\end{proof}

The proof works in the original latent space. For two threshold vectors, the Minkowski combination of their polyhedra is contained in the polyhedron at the combined threshold. Theorem~\ref{thm:aristodemou} makes the resulting probability inequality strict. 

\begin{corollary}
\label{cor:surjective-thresholds}
Let $\bm{U}$ have a positive and strictly log-concave density on $\R^r$. If $H\in\R^{q\times r}$ has full row rank, then the representation is threshold-identifiable on $\R^q$ and
\begin{equation}
 \bm{b}\longmapsto\P(H\bm{U}\leq\bm{b})
\end{equation}
is strictly log-concave on $\R^q$.
\end{corollary}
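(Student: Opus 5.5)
The plan is to reduce the corollary to Theorem~\ref{thm:polyhedral-logconcavity} with $D=\R^r$. Two things must be checked: that every threshold has positive probability, so $\mathcal{B}_H^{+}=\R^q$, and that the representation is threshold-identifiable on $\R^q$. Once both hold, the theorem gives strict log-concavity of $g_H(\bm{b})=\P(H\bm{U}\leq\bm{b})$ on all of $\R^q$.

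\textbf{Positivity.} Since $H$ has full row rank, the map $\bm{u}\mapsto H\bm{u}$ is surjective onto $\R^q$. For any $\bm{b}$, pick $\bm{u}_0$ with $H\bm{u}_0=\bm{b}-\bm{1}$. Then the open set $\{\bm{u}:H\bm{u}<\bm{b}\}$ is nonempty, because it contains $\bm{u}_0$, and it is contained in $\mathcal{C}_H(\bm{b})$. A positive density on $\R^r$ assigns positive mass to every nonempty open set, so $g_H(\bm{b})>0$.

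\textbf{Threshold identifiability.} Take distinct $\bm{b}_0,\bm{b}_1$. Without loss of generality some coordinate satisfies $b_{0k}<b_{1k}$. I would exhibit a nonempty open subset of $\mathcal{C}_H(\bm{b}_1)\setminus\mathcal{C}_H(\bm{b}_0)$. Consider the target set
\[
O=\{\bm{w}\in\R^q: b_{0k}<w_k<b_{1k},\ w_l<\min(b_{0l},b_{1l})\text{ for } l\neq k\}.
\]
This is a nonempty open box in $\R^q$. Every $\bm{w}\in O$ satisfies $\bm{w}\leq\bm{b}_1$ and violates $w_k\leq b_{0k}$. Its preimage $H^{-1}(O)$ is open because $H$ is continuous. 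It is nonempty because $H$ is surjective. It lies in $\mathcal{C}_H(\bm{b}_1)\setminus\mathcal{C}_H(\bm{b}_0)$. Positivity of the density then gives $\P\{\mathcal{C}_H(\bm{b}_0)\mathbin{\triangle}\mathcal{C}_H(\bm{b}_1)\}>0$.

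The only real content is this surjectivity step: full row rank is exactly what guarantees that open boxes in threshold space pull back to nonempty open sets in latent space. The rest is an immediate application of Theorem~\ref{thm:polyhedral-logconcavity}, whose convexity conclusion for $\mathcal{B}_H^{+}$ is trivial here since $\mathcal{B}_H^{+}=\R^q$.
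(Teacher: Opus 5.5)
Your proof is correct and follows essentially the same argument as the paper. Full row rank makes $H$ surjective, so nonempty open sets in threshold space pull back to nonempty open, hence positive-probability, sets in latent space. Applied to a region in the symmetric difference of the two lower orthants and to the open lower orthant, this gives threshold identifiability and $\mathcal{B}_H^{+}=\R^q$, after which Theorem~\ref{thm:polyhedral-logconcavity} applies. Your explicit box $O$ simply makes concrete the paper's claim that this symmetric difference contains a nonempty open set.
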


\begin{proof}

For $\bm{b}\in\R^q$, let
\[
\mathcal{L}(\bm{b})
=
(-\infty,b_1]\times\cdots\times(-\infty,b_q].
\]
Then
\[
\mathcal{C}_{H}(\bm{b})
=
H^{-1}\{\mathcal{L}(\bm{b})\},
\]
where $H^{-1}$ denotes the set-theoretic inverse image.

Take $\bm{b}_0\neq\bm{b}_1$. Then the two lower orthants differ in
at least one coordinate. Hence their symmetric difference contains a
nonempty open set $O\subset\R^q$. Since $H$ has full row rank, the
linear map $H:\R^r\to\R^q$ is surjective. Therefore every point of
$O$ has a preimage under $H$, so $H^{-1}(O)$ is nonempty. Moreover,
$H^{-1}(O)$ is open because $H$ is continuous. Since
\[
O
\subseteq
\mathcal{L}(\bm{b}_0)\mathbin{\triangle}\mathcal{L}(\bm{b}_1),
\]
taking inverse images gives
\[
H^{-1}(O)
\subseteq
\mathcal{C}_{H}(\bm{b}_0)
\mathbin{\triangle}
\mathcal{C}_{H}(\bm{b}_1).
\]

The primitive density is positive on $\R^r$, so every nonempty open
subset of $\R^r$ has positive probability. Hence
\[
\P\{H^{-1}(O)\}>0,
\]
and consequently
\[
\P\!\left\{
\mathcal{C}_{H}(\bm{b}_0)
\mathbin{\triangle}
\mathcal{C}_{H}(\bm{b}_1)
\right\}>0.
\]
Thus the representation is threshold-identifiable.

Finally, for any $\bm b\in\R^q$, the open lower orthant
\[
(-\infty,b_1)\times\cdots\times(-\infty,b_q)
\]
is nonempty. By the same surjectivity and continuity argument, its
inverse image under $H$ is a nonempty open subset of
$\mathcal C_H(\bm b)$ and therefore has positive probability. Hence
$g_H(\bm b)>0$ for every $\bm b\in\R^q$, so
$\mathcal B_H^+=\R^q$. Theorem~3 now gives the result.

\end{proof}

Full row rank makes $H$ surjective onto $\R^q$, so every vector of comparison values can occur as $H\bm{u}$ for some latent state. No threshold coordinate is inactive, and distinct threshold vectors define genuinely different events.

For choice models, $H_j$ has rank $m-1$. The only lost direction is a common shift of all utilities, which cannot affect a choice. Corollary~\ref{cor:surjective-thresholds} therefore gives strict log-concavity of the utility-difference probabilities whenever the primitive utility-error vector has a positive strictly log-concave joint density. This covers multinomial logit under independent Gumbel errors and multinomial probit under Gaussian errors, while also allowing dependent and nonexchangeable errors. The adjacent-category logit and fixed-score stereotype models use the
same choice matrix, so the conclusion applies on the spans of their
effective comparison vectors. When the stereotype scores are estimated
jointly with the regression coefficients, the parameter space becomes
a generally nonconvex rank-one constrained subset of the encompassing
multinomial logit model, so strict concavity of the encompassing
likelihood does not by itself imply uniqueness for the stereotype
estimator.

Complete rankings use a path-incidence comparison matrix of rank $J-1$, and multivariate binary models use a nonsingular diagonal sign matrix. Sequential category events use signed coordinate-selection matrices. For $j<m$, the matrix in Equation~\ref{eq:sequential-polyhedral} satisfies $H_j^{\mathrm{seq}}(H_j^{\mathrm{seq}})'=I_j$, while $H_m^{\mathrm{seq}}=-I_{m-1}$. These matrices therefore have full row rank. If the independent stage-error densities are positive and strictly log-concave, their product is a positive strictly log-concave joint density, and the corollary covers sequential logit and probit models as well. Partial rankings are covered whenever their selected comparison system has full row rank. Otherwise Theorem~\ref{thm:polyhedral-logconcavity} can be applied directly after redundant comparisons are removed or threshold identifiability is verified.

\begin{corollary}
\label{cor:coordinate-bounds}
Let $\bm{U}\in\R^d$ have a positive and strictly log-concave density. For $K\subseteq[d]$, let $E_K$ have rows $\bm{e}_k'$, $k\in K$. For index sets $K_u,K_l\subseteq[d]$, not both empty, define
\begin{equation}
\begin{gathered}
 H_{K_l,K_u}=\begin{pmatrix}E_{K_u}\\-E_{K_l}\end{pmatrix},
 \qquad
 \bm{b}_{K_l,K_u}=\begin{pmatrix}\bm{u}_{K_u}\\-\bm{l}_{K_l}\end{pmatrix},\\
 \mathcal{B}_{K_l,K_u}^{+}
 =\{\bm{b}_{K_l,K_u}:l_k<u_k\text{ for every }k\in K_l\cap K_u\}.
\end{gathered}
\end{equation}
The representation through $H_{K_l,K_u}$ is threshold-identifiable on $\mathcal{B}_{K_l,K_u}^{+}$. Consequently, $\bm{b}\mapsto\P(H_{K_l,K_u}\bm{U}\leq\bm{b})$ is strictly log-concave there.
\end{corollary}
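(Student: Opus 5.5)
The plan is to verify threshold identifiability directly from Definition~\ref{def:threshold-identifiable} and then invoke Theorem~\ref{thm:polyhedral-logconcavity} with $D=\R^d$. Here $\mathcal{C}_{H_{K_l,K_u}}(\bm{b})$ is the box
\[
\mathcal{R}(\bm{l},\bm{u})=\{\bm{x}\in\R^d: x_k\leq u_k\ (k\in K_u),\ x_k\geq l_k\ (k\in K_l)\},
\]
with unconstrained coordinates outside $K_l\cup K_u$. First I will check that $\mathcal{B}_{K_l,K_u}^{+}$, as defined, is exactly the positive threshold region. The event is nonempty, with nonempty interior, precisely when $l_k<u_k$ for every $k\in K_l\cap K_u$. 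A positive density on $\R^d$ then gives positive probability. Conversely, if $l_k\geq u_k$ for some such $k$, the box lies in a hyperplane or is empty, so it has probability zero. This also shows that the stated region is convex, as Theorem~\ref{thm:polyhedral-logconcavity} requires.

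The key step is positivity of the symmetric difference. Take distinct $\bm{b}_0,\bm{b}_1\in\mathcal{B}^{+}_{K_l,K_u}$. Then some coordinate of the threshold differs. Say it is an upper bound $u_k^{(0)}<u_k^{(1)}$ for some $k\in K_u$; the lower-bound case is symmetric.

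Consider the set of $\bm{x}$ satisfying $u_k^{(0)}<x_k<u_k^{(1)}$, together with the strict interior constraints of box $1$ in all other coordinates. Box $1$ has nonempty interior. If $k\in K_l$, we also need $x_k>l_k^{(1)}$, and the interval $(\max(u^{(0)}_k,l^{(1)}_k),u^{(1)}_k)$ is nonempty because $l_k^{(1)}<u_k^{(1)}$. So this set is a nonempty open set. It lies in box $1$ and, since $x_k>u_k^{(0)}$, outside box $0$. Hence it is contained in $\mathcal{C}(\bm{b}_1)\setminus\mathcal{C}(\bm{b}_0)$, and the positive density gives it positive probability.

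The main care point is this coordinate interaction when $k\in K_l\cap K_u$. One must make sure the open slab $(u^{(0)}_k,u^{(1)}_k)$ meets $(l^{(1)}_k,u^{(1)}_k)$. It does, because the upper end $u^{(1)}_k$ is shared and both intervals are nonempty. For the other coordinates, the open constraints of box $1$ are nonempty by membership in $\mathcal{B}^{+}$. With identifiability established, Theorem~\ref{thm:polyhedral-logconcavity} yields strict log-concavity of $\bm{b}\mapsto\P(H_{K_l,K_u}\bm{U}\leq\bm{b})$ on $\mathcal{B}^{+}_{K_l,K_u}$.

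This covers cumulative-link intermediate categories, where $H=(1,-1)'$ is singular. It also covers multivariate ordinal rectangles, which Corollary~\ref{cor:surjective-thresholds} cannot handle.
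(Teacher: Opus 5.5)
Your proof is correct and follows the paper's argument. You identify the event as a coordinate box, check that its positive region is exactly $\mathcal{B}^{+}_{K_l,K_u}$, exhibit a nonempty open rectangle in the symmetric difference, invoke positivity of the density, and apply Theorem~\ref{thm:polyhedral-logconcavity}; you also supply the explicit rectangle construction, including the case $k\in K_l\cap K_u$, which the paper only asserts, and note that Theorem~\ref{thm:polyhedral-logconcavity} concludes convexity of the positive region rather than requiring it.
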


\begin{proof}
For $\bm{b}=\bm{b}_{K_l,K_u}$,
\[
\mathcal{C}_{H_{K_l,K_u}}(\bm{b})
=
\{\bm{x}\in\R^d:l_k\leq x_k\text{ for }k\in K_l,\ x_k\leq u_k\text{ for }k\in K_u\}.
\]
This set has positive probability exactly when $l_k<u_k$ for every $k\in K_l\cap K_u$. Hence its positive threshold region is $\mathcal{B}_{K_l,K_u}^{+}$, which is convex.

Take distinct $\bm{b}_0,\bm{b}_1\in\mathcal{B}_{K_l,K_u}^{+}$. At least one selected lower or upper bound differs. The two corresponding coordinate-bound polyhedra then have a symmetric difference containing a nonempty open rectangle. Positivity of the density gives this symmetric difference positive probability. The representation is therefore threshold-identifiable, and Theorem~3 gives strict log-concavity on $\mathcal{B}_{K_l,K_u}^{+}$.
\end{proof}

Coordinate-bound matrices are generally not covered by the full row rank corollary. If a coordinate has both a lower and an upper bound, the matrix contains the dependent rows $\bm{e}_k'$ and $-\bm{e}_k'$ and cannot be surjective onto the full threshold space. It is nevertheless threshold-identifiable. Each admissible bound determines an actual face of the polyhedron, and changing that bound adds or removes a region with positive volume and therefore positive probability.

A cumulative-link factor is the one-dimensional case. Interior categories use both a lower and an upper bound, while the first and last categories retain only one of them. Hence positive strictly log-concave scalar error densities give strictly log-concave category probabilities for cumulative logit, probit, and complementary log-log models. Multivariate ordinal factors use the same construction coordinatewise, producing rectangles with one-sided bounds at boundary categories. The corollary therefore also covers bivariate and multivariate ordered probit with fixed positive definite covariance, and more generally any multivariate ordinal model whose latent density is positive and strictly log-concave.

\subsection{Uniqueness on the Estimable Span}

\begin{theorem}
\label{thm:latent-polyhedral-uniqueness}
Consider a latent polyhedral categorical response likelihood with pooled structure set $\mathcal{S}$ and let $V=\mathrm{span}(\mathcal{S})$. Suppose that the likelihood is positive at some finite parameter value. For every observed factor $(i,j)$, assume that the primitive latent vector $\bm{U}_i$ has a positive and strictly log-concave density on an open convex support and that its representation through $H_{ij}$ is threshold-identifiable on its positive threshold region.

Then a finite MLE exists if and only if $\mathcal{S}$ has overlap. When overlap holds, the log-likelihood is strictly concave on
\begin{equation}
 \mathcal{D}=\{\bm{\beta}\in V:L(\bm{\beta})>0\},
\end{equation}
and the MLE is unique on $V$. Its full set of maximizers in $\R^p$ is
\begin{equation}
 \widehat{\bm{\beta}}+V^\perp.
\end{equation}
If the log-likelihood is differentiable on $\mathcal{D}$, it has no other critical point there.
\end{theorem}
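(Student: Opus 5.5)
The plan is to reduce everything to Theorem~\ref{thm:main} for existence and to Theorem~\ref{thm:polyhedral-logconcavity} for strict concavity, and then to transfer strictness from threshold space to $V$ using the spanning property of $\mathcal{S}$.

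\textbf{Existence and necessity.} I would first check the hypotheses of Theorem~\ref{thm:main}. A positive density makes each $G_{ij}$ continuous, because polyhedral boundaries $\{H\bm{u}=\bm{b}\}$ faces have Lebesgue measure zero. For strict threshold monotonicity, take $G_{ij}(\bm{b})>0$ and $\bm{a}\geq\bm{0}$, $\bm{a}\neq\bm{0}$. Then $\mathcal{C}(\bm{b})\subseteq\mathcal{C}(\bm{b}+\bm{a})$. Whenever $\bm{b}+\bm{a}$ also lies in the positive region, threshold identifiability gives the difference positive probability, and $\bm{b}+\bm{a}\in\mathcal{B}^+$ holds by inclusion. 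Hence $G(\bm{b}+\bm{a})>G(\bm{b})$. With positivity of the likelihood somewhere, Theorem~\ref{thm:main} yields the equivalence between a finite MLE and overlap.

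\textbf{Strict concavity on $\mathcal{D}$.} Each observed factor $G_{ij}=g_{H_{ij}}$ is strictly log-concave on its convex positive region $\mathcal{B}^+_{ij}$ by Theorem~\ref{thm:polyhedral-logconcavity}. The set $\mathcal{D}$ is the intersection of $V$ with the preimages of the $\mathcal{B}^+_{ij}$ under the linear maps $\bm{\beta}\mapsto B_{ij}'\bm{\beta}$, so it is convex. Composition with a linear map preserves log-concavity, so $\ell=\sum y_{ij}\log G_{ij}(B_{ij}'\bm{\beta})$ is concave on $\mathcal{D}$.

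For strictness, take distinct $\bm{\beta}_0,\bm{\beta}_1\in\mathcal{D}$ and set $\bm{d}=\bm{\beta}_1-\bm{\beta}_0\in V\setminus\{\bm{0}\}$. Since $V=\mathrm{span}(\mathcal{S})$, $\bm{d}$ cannot be orthogonal to every structure vector. Hence some observed factor has $B_{ij}'\bm{\beta}_0\neq B_{ij}'\bm{\beta}_1$, and that term is strictly log-concave along the segment. The other terms are weakly concave, so the sum is strict. This transfer step is the conceptual crux, but it is short. The step I expect to be the main obstacle is making sure the domain issues are handled: every term must be evaluated inside its positive region. This is guaranteed because $L>0$ on $\mathcal{D}$ forces each observed $G_{ij}>0$.

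\textbf{Uniqueness and the maximizer set.} Under overlap, a maximizer over $\R^p$ exists, and its projection onto $V$ is also a maximizer, since $L(\bm{\beta})=L(\mathrm{Proj}_V\bm{\beta})$. It lies in $\mathcal{D}$ because the likelihood is positive somewhere. A strictly concave function on a convex set has at most one maximizer, so $\widehat{\bm{\beta}}$ is unique on $V$. Any ambient maximizer $\bm{\beta}$ has $\mathrm{Proj}_V\bm{\beta}$ maximizing on $V$, hence equal to $\widehat{\bm{\beta}}$, which gives the set $\widehat{\bm{\beta}}+V^\perp$. Finally, assume differentiability on $\mathcal{D}$. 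The set $\mathcal{D}$ is relatively open in $V$ by continuity, so a critical point there of a concave function is a global maximizer on $\mathcal{D}$. Strict concavity then makes it equal to $\widehat{\bm{\beta}}$.
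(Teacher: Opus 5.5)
Your proposal is correct and follows the paper's proof essentially step by step. Both arguments obtain strict log-concavity of each factor from Theorem~\ref{thm:polyhedral-logconcavity} and convexity of $\mathcal{D}$ as an intersection of linear preimages, transfer strictness to $V$ via the spanning property of $\mathcal{S}$, derive strict threshold monotonicity from threshold identifiability in order to invoke Theorem~\ref{thm:main}, and use $L(\bm{\beta})=L(\mathrm{Proj}_V\bm{\beta})$ for the maximizer set. Your explicit checks that $\bm{b}+\bm{a}$ remains in the positive region and that $\mathcal{D}$ is relatively open in $V$ supply small details the paper leaves implicit.
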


\begin{proof}
For every observed factor $(i,j)$, Theorem~3 shows that
\[
G_{ij}(\bm{b})
=
\P(H_{ij}\bm{U}_i\leq\bm{b})
\]
is strictly log-concave on its positive threshold region. These regions are convex. Their linear inverse images therefore have a convex intersection, so
\[
\mathcal{D}=\{\bm{\beta}\in V:L(\bm{\beta})>0\}
\]
is convex.

Take distinct $\bm{\beta}_0,\bm{\beta}_1\in\mathcal{D}$ and set $\bm{d}=\bm{\beta}_0-\bm{\beta}_1$. Since $\bm{d}\in V\setminus\{\bm{0}\}$ and $V=\operatorname{span}(\mathcal{S})$, some structure vector $\bm{b}_{ijr}\in\mathcal{S}$ satisfies $\bm{b}_{ijr}'\bm{d}\neq0$. Hence
\[
B_{ij}'\bm{\beta}_0\neq B_{ij}'\bm{\beta}_1
\]
for at least one observed factor. Strict log-concavity gives a strict inequality for that factor at every nontrivial convex combination of $\bm{\beta}_0$ and $\bm{\beta}_1$. The remaining factors give weak inequalities. Multiplication by $y_{ij}>0$ and summation show that $\ell$ is strictly concave on $\mathcal{D}$.

The induced CDFs are continuous because the primitive latent vectors have densities. Threshold identifiability also gives strict threshold monotonicity. Indeed, if $\bm{b}_1\geq\bm{b}_0$ coordinatewise, $\bm{b}_1\neq\bm{b}_0$, and $g_H(\bm{b}_0)>0$, then
\[
\mathcal{C}_{H}(\bm{b}_0)\subseteq\mathcal{C}_{H}(\bm{b}_1),
\]
and the two events differ with positive probability. Hence $g_H(\bm{b}_1)>g_H(\bm{b}_0)$. Theorem~1 therefore gives a finite MLE if and only if $\mathcal{S}$ has overlap.

Under overlap, the finite maximizer lies in $\mathcal{D}$, where strict concavity gives uniqueness on $V$. Since $L(\bm{\beta})=L(\operatorname{Proj}_{V}\bm{\beta})$, the ambient maximizer set is $\widehat{\bm{\beta}}+V^{\perp}$. A differentiable concave function has only maximizing critical points.
\end{proof}

The theorem combines two distinct forms of nondegeneracy. Threshold identifiability is a factor-level condition. It says that changing the effective threshold vector changes the corresponding latent event with positive probability. The spanning property is a parameter-level condition. It says that every nonzero direction in $V$ changes the threshold vector of at least one observed factor. Neither condition can replace the other. A probability may be strictly log-concave in its thresholds but remain constant along a parameter direction that all threshold maps annihilate. Conversely, a full-rank design cannot create strict curvature if a latent inequality is redundant.

Strictness passes to the full likelihood because distinct parameters in $V$ change at least one observed threshold vector. Threshold identifiability also gives strict threshold monotonicity, so Theorem~\ref{thm:main} supplies the existence criterion.

The corollaries provide simple checks for the main models. Full row rank covers choices, complete rankings, multivariate binary responses, sequential models, adjacent-category logit, and fixed-score stereotype models. Coordinate-bound matrices cover cumulative-link intervals and multivariate ordinal rectangles even though those matrices are not surjective. If $V=\R^p$, no ambient flat directions remain. In the choice case, Proposition~\ref{lem:rank-equivalence} shows that this is equivalent to full row rank of the combined design matrix $Z$.

\section[Separation in Ordinal Survey Data]{Separation in Ordinal Survey Data}
\label{sec:application}

This section illustrates with cumulative-link models how our results can be used in general statistical and data analytical practice. Separation in cumulative-link models was hitherto difficult to diagnose directly. The separation diagnostics use the R package \pkg{divoRce} \citep{separationcheck}.

We analyze an anonymized random sample of $n=300$ complete cases from a survey on willingness to share health data with a social health insurance provider in Austria \citep{drexler:2026}. Willingness (``WTSSHI'') was self-reported on an ordinal scale from 1 (lowest) to 4 (highest). Model predictors are respondents' self-reported trust in the provider (``trustSHI'') and self-reported knowledge about health data (``knowledge''), both treated as categorical. Figure~\ref{fig:wtsmosaic} shows a mosaic plot of the two predictors and the outcome. We specify an interaction of trust and knowledge and fit cumulative-link models with logit and complementary log-log (cloglog) links. 

\begin{figure}
  \centering
\includegraphics[width=.78\textwidth]{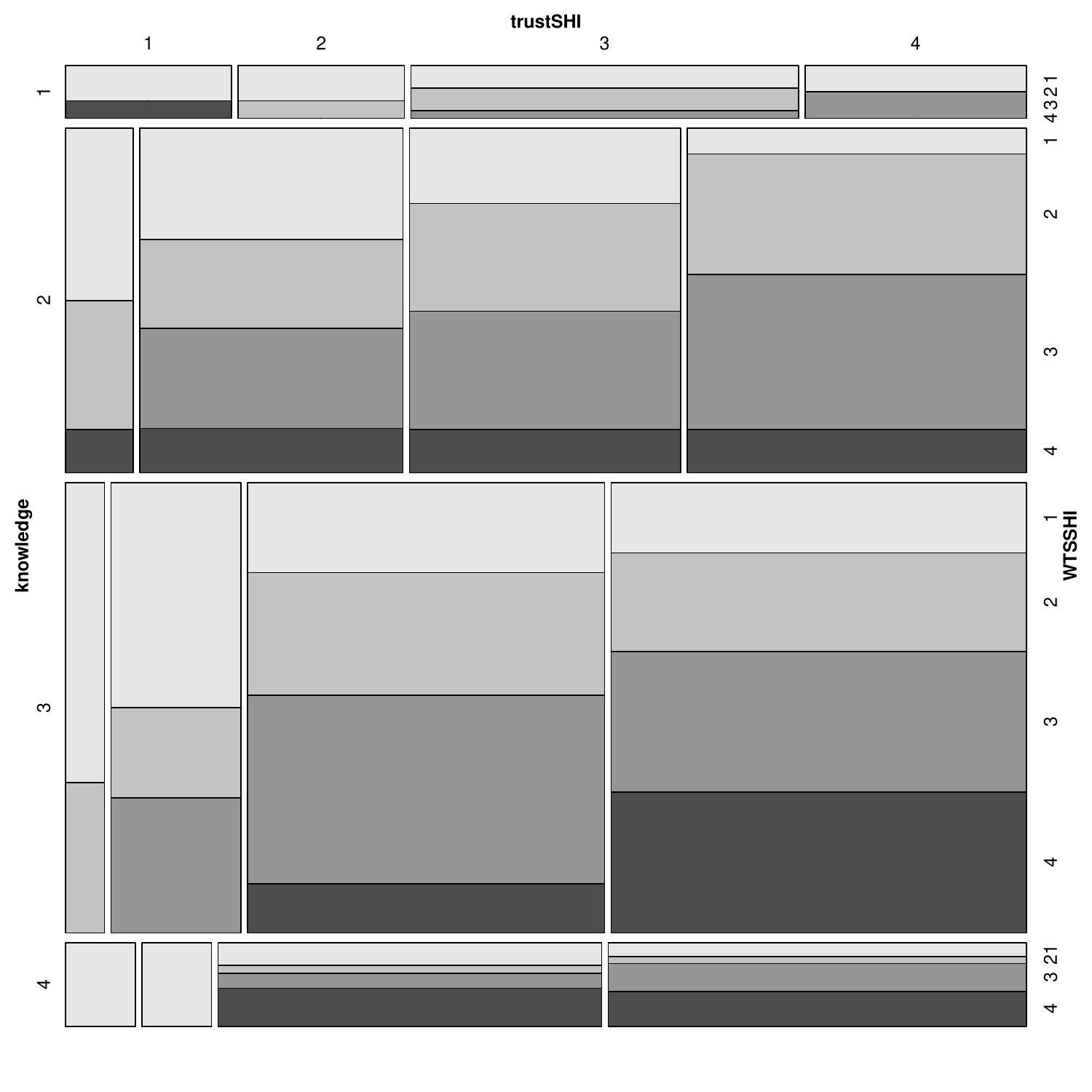}
\caption{Willingness to share health data (WTSSHI) by self-reported knowledge and trust in social health insurance provider (trustSHI). Respondents with knowledge level 4 (highest level) and trust levels 1 or 2 (lowest two levels) only reported willingness category 1 (lowest). This pattern is visible in the bottom left corner.}
\label{fig:wtsmosaic}
\caption*{\textit{Image description.} A mosaic plot displays willingness to share health data by knowledge and trust. Respondents with knowledge level 4 and trust levels 1 or 2 only reported willingness category 1.}
\end{figure}

Fitting these models produces a host of numerical symptoms: very large values for some coefficients, a numerically singular Hessian, and unavailable standard errors, Wald statistics, and $p$-values. The full coefficient tables are reported in the Appendix. These symptoms arise under both links. Their structural source is that respondents with knowledge level 4 and trust levels 1 or 2 report only the lowest willingness category, as visible in the lower left part of Figure~\ref{fig:wtsmosaic}. Changing the optimizer cannot restore a finite MLE for these specifications.

Based on the results in this article we can diagnose the situation in detail. Applying the linear program in Equation~\ref{lp2} confirms separation as the culprit, and the detailed diagnostics identify it as quasi-complete. The cone of separating directions from Section~\ref{sec:dim} has dimension $2$: two linearly independent directions weakly increase all observed thresholds and strictly increase some of them. This diagnosis is more informative than a generic convergence warning because it identifies both the type of failure and the multiplicity of the directions along which the likelihood can escape.

We find that the separation phenomenon is localized. Four respondents display a separating configuration. All four report knowledge level 4, trust levels 1 or 2, and willingness category 1. A columnwise variant of the linear program identifies four model terms associated with the nonexistence of a finite MLE: the indicator for knowledge level 4 and its interactions with trust levels 2, 3, and 4. Collapsing trust levels 1 and 2 does not resolve the problem: further checks confirm that separation persists after this recoding.

The results can be used to explore several possible responses to the diagnosis. First, a substantively justified model simplification may remove the interaction terms in favor of an additive model. Both additive specifications have overlap, and the likelihood-ratio test and AIC support simplification for the logit model but not for the cloglog model. Second, refitting after excluding the four respondents confirms the geometric diagnosis and provides a sensitivity analysis. The remaining data display overlap under both links, although two coefficients become aliased because the corresponding predictor combinations are absent. Exclusion from a primary analysis requires independent substantive justification. Additional sampling in these predictor configurations could also be considered, although restoring overlap depends on the outcomes observed. Third, when substantively appropriate, alternative model classes can be considered because their structure vectors may differ. Here, adjacent-category logit and ordered stereotype specifications are also flagged as separated. Fourth, penalized or bias-reduced procedures can provide finite alternatives \citep{firth,kosmidis-lunardon-2024}. The reproduction script illustrates bias reduction for the separated adjacent-category logit model retaining the trust by knowledge interaction; this however changes the response model from the cumulative-link specification.

The main practical value of our diagnostics is not to prescribe remedies to be automatically and blindly applied. Rather, it is to distinguish structural nonexistence from numerical failure and to identify the observations, model terms, and recession directions that require an analyst's or consultant's attention when applying categorical response models.

\section{Conclusion}
\label{sec:conclusion}

This paper reduces the existence and uniqueness theory of a broad collection of categorical response models to one geometric object, the pooled set of structure vectors of a latent polyhedral representation. Overlap of that set characterizes existence of a finite MLE, with sufficiency under continuity alone and necessity under strict threshold monotonicity. The argument dispenses with likelihood concavity, exchangeability, and full design rank, and rank deficiency is absorbed by the estimable span. Positive strictly log-concave latent densities with threshold-identifiable representations add uniqueness on that span. One linear program determines whether overlap holds for every member of the family, and the cone of separating directions measures the multiplicity of a failure.

For statistical practice and consulting, the framework turns numerical symptoms into a checkable property of the data and the model. It distinguishes structural nonexistence from numerical failure and identifies the observations and model terms associated with separation. Software developers can use these diagnostics to provide informative warnings when a finite MLE does not exist, together with details of the separating configuration. We recommend routinely using these checks as part of standard model diagnostics so that model interpretation, revision, penalized or bias-reduced estimation, or substantively justified data decisions proceed on clear and well-founded grounds.

The family boundary marks open problems. Reference-ratio models place comparison thresholds nonlinearly in the parameter, falling outside linear-threshold theory \citep{peyhardi+trottier+guedon:2015,glmcat}, though the affine case reduces to multinomial logit up to rescaling. Extending the overlap criterion to nonlinear threshold maps is the natural next step.

\section*{Data Availability Statement}
Data are available as \texttt{data(HDSS)} in the R package \pkg{divoRce} \url{https://cran.r-project.org/package=divoRce}. The R script to reproduce the results is provided as supplementary material and at \url{https://codeberg.org/trusch/divoRce_reproduction_script/src/branch/main/divoRce-reprodscript.R}.

\noindent\textbf{Funding.} Lukas Sablica's time during final preparation and submission was supported by funds of the Oesterreichische Nationalbank, Austrian Central Bank Anniversary Fund, project number 19079. The research underlying the manuscript was conducted before this funding.

\noindent\textbf{Disclosure statement.} The authors declare no competing interests.

\section*{Declaration of generative AI use}
The authors used OpenAI GPT 5.5 for spell-checking and language polishing and reviewed and edited the content. They take full responsibility for the content.

\bibliography{reference}

\clearpage
\appendix
\section{Additional Details for the Empirical Illustration}
\label{sup:illustration}

The coefficient tables below use cumulative-link models with a trust by knowledge interaction and no additional covariates. Table~\ref{tbl:wtsmod} records the numerical output for the two separated fits. These are optimizer stopping values, not finite MLEs. After excluding the four respondents identified by the diagnostics, both links have overlap. Table~\ref{tbl:wtsmod2} reports the cloglog sensitivity fit. Two coefficients are aliased after exclusion; the remaining coefficients are reported in the reduced parameterization chosen by \texttt{clm}. The reproduction script also fits additive, adjacent-category, and ordered stereotype alternatives.

\begin{table}[htbp]
\centering
\scriptsize
\begin{tabular}{lrrrrrrrr}
\toprule
 & \multicolumn{4}{c}{Logit} & \multicolumn{4}{c}{Cloglog} \\
\cmidrule(lr){2-5} \cmidrule(lr){6-9}
Variable & Estimate & Std. Error & z value & Pr($>|z|$) & Estimate & Std. Error & z value & Pr($>|z|$) \\
\midrule
trustSHI2 & -0.768 & NA & NA & NA & -1.687 & NA & NA & NA \\
trustSHI3 & 0.121 & NA & NA & NA & -1.056 & NA & NA & NA \\
trustSHI4 & 0.554 & NA & NA & NA & -0.572 & NA & NA & NA \\
knowledge2 & -0.001 & NA & NA & NA & -0.714 & NA & NA & NA \\
knowledge3 & -0.768 & NA & NA & NA & -1.687 & NA & NA & NA \\
knowledge4 & -20.896 & NA & NA & NA & -5.675 & NA & NA & NA \\
trustSHI2:knowledge2 & 1.734 & NA & NA & NA & 2.156 & NA & NA & NA \\
trustSHI3:knowledge2 & 1.128 & NA & NA & NA & 1.612 & NA & NA & NA \\
trustSHI4:knowledge2 & 1.101 & NA & NA & NA & 1.279 & NA & NA & NA \\
trustSHI2:knowledge3 & 1.717 & NA & NA & NA & 2.519 & NA & NA & NA \\
trustSHI3:knowledge3 & 2.014 & NA & NA & NA & 2.612 & NA & NA & NA \\
trustSHI4:knowledge3 & 2.238 & NA & NA & NA & 2.651 & NA & NA & NA \\
trustSHI2:knowledge4 & 0.768 & NA & NA & NA & 1.293 & NA & NA & NA \\
trustSHI3:knowledge4 & 23.071 & NA & NA & NA & 7.418 & NA & NA & NA \\
trustSHI4:knowledge4 & 22.892 & NA & NA & NA & 6.948 & NA & NA & NA \\
Threshold $1|2$ & 0.090 & NA & NA & NA & -1.442 & NA & NA & NA \\
Threshold $2|3$ & 1.354 & NA & NA & NA & -0.466 & NA & NA & NA \\
Threshold $3|4$ & 3.051 & NA & NA & NA & 0.573 & NA & NA & NA \\
\addlinespace
Log-likelihood & -381.142 & & & & -378.994 & & & \\
\bottomrule
\end{tabular}
\caption{\label{tbl:wtsmod} Numerical coefficient values returned by cumulative logit and cloglog fits with a trust by knowledge interaction ($n=300$). Neither model has a finite MLE. All standard errors, Wald statistics, and $p$-values are unavailable because the Hessian is numerically singular. The displayed values depend on the optimizer's stopping rule; they are not finite maximum likelihood estimates.}
\end{table}

\begin{table}[htbp]
\centering
\scriptsize
\begin{tabular}{lrrrr}
\toprule
Variable & Estimate & Std. Error & z value & Pr($>|z|$) \\
  \midrule
trustSHI2           &  $-1.687 $&   $ 0.960$ & $-1.758$ & $0.079$ \\ 
trustSHI3            & $-1.056$ &   $ 0.821$ & $-1.286$ & $0.198$  \\
trustSHI4            & $-0.572$ &   $ 0.893$ & $-0.640$ & $0.522$   \\
knowledge2           & $-0.714$ &   $ 0.807$ & $-0.885$ & $0.376$ \\  
knowledge3           & $-1.687$ &   $ 0.847$ & $-1.993$ & $0.046$ \\ 
knowledge4           & $ 1.273$ &   $ 0.664$ & $ 1.917$ & $0.055$ \\ 
trustSHI2:knowledge2 & $ 2.156$ &   $ 1.054$ & $ 2.046$ & $0.041$ \\ 
trustSHI3:knowledge2 & $ 1.612$ &   $ 0.928$ & $ 1.736$ & $0.082$ \\ 
trustSHI4:knowledge2 & $ 1.279$ &   $ 0.988$ & $ 1.294$ & $0.196$ \\ 
trustSHI2:knowledge3 & $ 2.519$ &   $ 1.094$ & $ 2.302$ & $0.021$ \\
trustSHI3:knowledge3 & $ 2.612$ &   $ 0.958$ & $ 2.727$ & $0.006$ \\
trustSHI4:knowledge3 & $ 2.651$ &   $ 1.020$ & $ 2.598$ & $0.009$ \\
trustSHI2:knowledge4 &      NA &        NA &     NA &      NA   \\
trustSHI3:knowledge4 & $ 0.471$  &  $ 0.880$ & $ 0.535$ & $0.593$  \\ 
trustSHI4:knowledge4 &      NA &        NA &     NA &      NA   \\
 Threshold $1|2$ & $-1.442$ & $0.719$ & $-2.007$ & $0.045$ \\
 Threshold $2|3$ & $-0.466$ & $0.714$ & $-0.653$ & $0.514$ \\
 Threshold $3|4$ & $0.573$ & $0.709$ & $0.808$ & $0.419$ \\
\addlinespace
Log-likelihood & $-378.994$ & & & \\
\bottomrule
\end{tabular}
\caption{\label{tbl:wtsmod2} Cumulative cloglog fit with the same interaction after excluding respondents $56$, $75$, $90$, and $131$ as a sensitivity analysis ($n=296$). Overlap holds. Two interaction coefficients are aliased and marked NA; the remaining estimates and standard errors refer to the reduced parameterization selected by \texttt{clm}. The full parameter vector is not uniquely identified.}
\end{table}

\end{document}